\newcommand{\ignore}[1]{}
\crefname{appendix}{}{}
\crefname{lem}{Lemma}{Lemmas}
\crefname{clm}{Claim}{Claims}
\crefname{thm}{Theorem}{Theorems}
\crefname{prop}{Proposition}{Propositions}
\crefname{figure}{Figure}{Figures}
\crefname{rem}{Remark}{Remarks}
\newtheorem{thm}{Theorem}
\newtheorem{lem}{Lemma}
\newtheorem{cor}{Corollary}
\newtheorem{prop}{Proposition}
\theoremstyle{remark}
\newtheorem{rem}{Remark}
\newtheorem{ex}{Example}
\let\geq\geqslant
\let\leq\leqslant
\newcommand{\f}{\mathcal{I}}
\newcommand{\fm}{\mathcal{C}_e}
\newcommand{\hm}{\mathcal{D}_e}
\newcommand{\matr}{M} 
\newcommand{\RR}{\mathbb{R}} 
\newcommand{\mwb}{\tau} 
\newcommand{\MB}[2]{\mwb_{#1}(#2)}
\newcommand{\Kpol}[2]{\kappa_{#1}(#2)} 
\newcommand{\Item}[1]{\item[\mbox{\rm (#1)}]} 
\newcommand{\minmax}[2]{{#2}[#1]} 
\newcommand{\bottl}[2]{#2\{#1\}}
\newcommand{\thr}{\theta} 
\newcommand{\ee}{f} 
\newcommand{\eee}{g} 
\newcommand{\exchange}[3]{{#1}-{#2}+{#3}}
\newcommand{\xz}{x'}  
\newcommand{\contr}[2]{#1/#2} 
\newcommand{\remove}[2]{#1\!\setminus\! #2} 
\newcommand{\fremove}[2]{#1\setminus #2} 
\newcommand{\ecirc}{\mathit{e}\text{-}\mathrm{circuit}}
\newcommand{\ecocirc}{\mathit{e}\text{-}\mathrm{cocircuit}}
\newcommand{\B}{B} 
\newcommand{\exB}{A} 
\newcommand{\xB}{B'} 
\newcommand{\Circ}[2]{C(#1,#2)} 
\newcommand{\Path}[2]{\mathrm{Path}(#1,#2)} 
\newcommand{\Cut}[2]{\mathrm{Cut}(#1,#2)} 
\newcommand{\eepath}{p_0} 
\newcommand{\eecut}{c_0} 
\newcommand{\bij}{\phi}
\newcommand{\allE}{E_{\mathrm{all}}(x)}
\newcommand{\noneE}{E_{\mathrm{none}}(x)}
\newcommand{\someE}{E_{\mathrm{some}}(x)}
\newcommand{\nbases}{\mathcal{B}_{0}} 
\newcommand{\pbases}{\mathcal{B}_{1}} 
\newcommand{\tol}[2]{t_{#1}(#2)}
\def\pathscuts{
\begin{center}
\begin{tikzpicture}
\SetVertexStyle[MinSize=0.6]
\SetTextStyle[TextFont=\footnotesize]
\Vertex[x=-2,y=2,color=white]{a}
\Vertex[x=-2,y=1.5,color=white]{b}
\Vertex[x=-2,y=1,color=white]{c}
\Edge[style=dashed](a)(c)
\Text[x=-2.4,y=1]{$\eecut$}
\Vertex[x=-2,color=gray]{e}
\Text[x=-2,y=-0.3]{$e$}
\Vertex[x=-1,color=white]{}
\Vertex[x=1,color=white]{A}
\Vertex[x=1.5,color=white]{B}
\Vertex[x=2,color=white]{C}
\Text[x=2,y=-0.3]{$\eepath$}
\Edge[style=dashed](A)(B)
\Edge[style=dashed](B)(C)
\Vertex[shape=ellipse,opacity=0.1,color=white,x=-0.1,y=-0.1,size=1.3,style={dashed,  minimum width=5.5cm}]{}
\Vertex[x=3,color=gray]{ee}
\Text[x=3.3]{$e$}
\Vertex[x=2,Pseudo]{C1}
\Text[x=-3.4,y=1.5]{$\Cut{e}{\B}$}
\Text[x=-4.5]{$x$-optimal basis $\B$}
\Text[x=1.6,y=1]{$\Path{e}{\B}$}
\Vertex[x=0.4,color=white]{}
\Text[x=-3,y=-1]{$\bottl{e}{x}=x(e)\leq x(\eecut)=\minmax{e}{x}$}
\Text[x=-3.2,y=-1.5]{$\eecut$ is a \emph{lightest} element in $\Cut{e}{\B}$}
\Text[x=-3.2,y=-2]{$\exchange{\B}{e}{\eecut}$ is a next best basis}
\Text[x=2.7,y=-1]{$\bottl{e}{x}=\minmax{e}{x}=x(\eepath)\leq x(e)$}
\Text[x=3,y=-1.5]{$\eepath$ is a \emph{heaviest} element in $\Path{e}{\B}$}
\Text[x=2.9,y=-2]{$\exchange{\B}{\eepath}{e}$ is a next best basis}
\Vertex[x=-0.2,y=3.3,Pseudo]{p}
\Vertex[x=-0.2,y=-2.3,Pseudo]{q}
\Edge[style=dashed](p)(q)
\Text[x=-2,y=3]{{\bf Case}: $e\in\B$}
\Text[x=1.5,y=3]{{\bf Case}: $e\not\in\B$}
\end{tikzpicture}
\end{center}
}
\begin{document}

  \title[Tropical Kirchhoff's Formula]{Tropical Kirchhoff's Formula and Postoptimality in Matroid
    Optimization\thanks{Research supported by the DFG grant JU~3105/1-2
    (German Research Foundation).}}

  \author[S. Jukna]{Stasys~Jukna}
\thanks{Research supported by the DFG grant
    JU~3105/1-2 (German Research Foundation).}
  \address{Faculty of Mathematics and Computer Science, Vilnius
    University, Lithuania}
  \email{stjukna@gmail.com}
  \urladdr{http://www.thi.cs.uni-frankfurt.de/~jukna/}

  \author[H. Seiwert]{Hannes Seiwert}
  \address{Faculty of Computer Science and Mathematics, Goethe
    University Frankfurt, Germany}
  \email{seiwert@thi.cs.uni-frankfurt.de}

  \begin{abstract}
    Given an assignment of real weights to the ground elements of a
    matroid, the min-max weight of a ground element $e$ is the
    minimum, over all circuits containing $e$, of the maximum weight
    of an element in that circuit with the element $e$ removed.  We
    use this concept to answer the following structural questions for
    the minimum weight basis problem. Which elements are persistent
    under a given weighting (belong to all or to none of the optimal
    bases)? What changes of the weights are allowed while preserving
    optimality of optimal bases?  How does the minimum weight of a
    basis change when the weight of a single ground element is
    changed, or when a ground element is contracted or deleted?  Our
    answer to this latter question gives the tropical $(\min,+,-)$
    analogue of Kirchhoff's arithmetic $(+,\times,/)$ effective
    conductance formula for electrical networks.
  \end{abstract}

\maketitle

\keywords{
  Kirchhoff's formula,  minimum weight matroid basis,
  sensitivity,  persistency,  postoptimality}

\section{Introduction}

The \emph{minimum weight basis problem} on a matroid $\matr=(E,\f)$
is, given an assignment $x:E\to\RR$ of real weights to the ground
elements, to compute the minimum weight $\MB{\matr}{x}$ of a basis,
the latter being the sum of weights of the basis elements. Thanks to
classical results of Rado~\cite{Rado}, Gale~\cite{Gale} and
Edmonds~\cite{Edmonds}, the \emph{algorithmic} aspect of this problem
is well understood: the minimum weight basis problem on a downward
closed set system $\f$ can be solved by the greedy algorithm precisely
when the system $\f$ forms the family of independent sets of a
matroid.

In this paper, we are interested in \emph{structural} aspects of the
minimum weight basis problem. Given a weighting $x:E\to\RR$ of ground
elements, the following questions naturally arise.

\begin{enumerate}

\item How does the optimal value $\MB{\matr}{x}$ change when an
  element $e\in E$ is contracted or deleted?

\item How does the optimal value $\MB{\matr}{x}$ change when the
  weight of an element $e\in E$ is changed?

\item By how much can the weight of a single element $e\in E$ be
  changed without changing optimal bases?

\item What simultaneous changes of the weights preserve optimality of
  optimal bases?

\item What ground elements belong to all, to none or to some but not
  to all of the optimal bases?

\end{enumerate}
Question~3 is a special case of Question~4 and was already answered by
Tarjan~\cite{tarjan} (for graphic matroids) and by
Libura~\cite{libura} (for general matroids) in terms of fundamental
circuits and cuts relative to a given optimal basis.  Question~5 was
answered by Cechl\'arova and Lacko~\cite{lacko} in terms of the rank
function of the underlying matroid. But to our best knowledge, no
answers (in any terms) to Questions 1, 2 and 4 were known so far.

It turns out that all five questions can be answered using the concept
of \emph{min-max weight} $\minmax{e}{x}$ of a ground element $e$,
which we define as the minimum, over all circuits $C$ containing $e$,
of the maximum weight of an element in the independent set $C-e$:
\[
\minmax{e}{x}:=\min_{\substack{C~\mathrm{circuit} \\ e\in C}}\
\max_{\ee\in C-e}\ x(\ee)\,.
\]
Further, we call
\[
\bottl{e}{x}:=\min\{x(e), \minmax{e}{x}\}
\]
the \emph{bottleneck weight} of~$e$. The answers to the aforementioned
questions $1$--$5$ are given by the corresponding
\cref{thm:kirchhoff,thm:persist,thm:postopt,thm:sens1,thm:sens-gen} in
the next section. All necessary matroid concepts are recalled in
\cref{sec:prelim}.

\section{Results}
\label{sec:results}
Let $\matr=(E,\f)$ be a \emph{loopless} matroid, that is, no element
$e\in E$ belongs to all bases and no singleton set $\{e\}$ is
dependent. All our results concern the minimum weight basis problem on
$\matr$. Given a weighting $x:E\to\RR$, a basis $\B$ is
$x$-\emph{optimal} (or simply \emph{optimal} if the weighting is clear
from the context) if its $x$-weight $x(\B)=\sum_{e\in\B}x(e)$ is
minimal among all bases.  The weight of such a basis, that is, the
number $\MB{\matr}{x}$, is the \emph{optimal value} under the
weighting~$x$.

\paragraph{Contraction and deletion}
Given a ground element $e\in E$, the independent sets of the matroid
$\contr{\matr}{e}$, obtained by \emph{contracting} the element $e$,
are all sets $I-e$ with $I\in\f$ and $e\in I$, while those of the
matroid $\remove{\matr}{e}$, obtained by \emph{deleting} the element
$e$, are all sets $I\in\f$ with $e\not\in I$.  Since the matroid
$\matr$ is loopless, each of these two matroids contains at least one
nonempty independent set. Note that the set of ground elements of both
matroids $\remove{\matr}{e}$ and $\contr{\matr}{e}$ is~$E-e$.  For
every basis $\B$ of $\matr$, either $\B-e$ is a basis of
$\contr{\matr}{e}$ (if $e\in \B$), or $\B$ is a basis of
$\remove{\matr}{e}$ (if $e\not\in \B$). This gives us a known
recursion
\[
\MB{\matr}{x} = \min\big\{\MB{\contr{\matr}{e}}{x}+x(e),\
\MB{\fremove{\matr}{e}}{x}\big\}\,.
\]
But what about the opposite direction: if we already know the optimal
value $\MB{\matr}{x}$ in the matroid $\matr$, what are the optimal
values $\MB{\contr{\matr}{e}}{x}$ and $\MB{\fremove{\matr}{e}}{x}$ in
the two submatroids $\contr{\matr}{e}$ and $\remove{\matr}{e}$?  Our
main result (\cref{thm:kirchhoff}) gives the answer.

\begin{thm}[Tropical Kirchhoff's formula]\label{thm:kirchhoff}
  Let $\matr=(E,\f)$ be a loopless matroid, and $e\in E$ be a ground
  element. For every weighting $x:E\to\RR$, the following equalities
  hold:
  \begin{enumerate}
    \Item{a} $\MB{\contr{\matr}{e}}{x} =\MB{\matr}{x} - \bottl{e}{x}$;
    \Item{b} $\MB{\fremove{\matr}{e}}{x}
    =\MB{\matr}{x}-\bottl{e}{x}+\minmax{e}{x}$.
  \end{enumerate}
\end{thm}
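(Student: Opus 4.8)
The plan is to prove both equalities by relating optimal bases of $\matr$ to optimal bases of $\contr{\matr}{e}$ and $\remove{\matr}{e}$, and to use a fundamental circuit/cocircuit exchange argument to control exactly how much the optimal value drops. The two cases $e\in\B$ and $e\notin\B$ for an optimal basis $\B$ of $\matr$ naturally arise, and the case analysis in the figure \pathscuts\ already suggests the structure: depending on whether $e$ is in some optimal basis, either $\bottl{e}{x}=x(e)$ (and the best exchange replaces $e$ in $\B$ by the lightest element of its fundamental cocircuit $\Cut{e}{\B}$) or $\bottl{e}{x}=\minmax{e}{x}$ (and the best exchange adds $e$ to $\B$ and removes the heaviest element of the fundamental circuit $\Path{e}{\B}$).

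First I would establish the two inequalities for part~(a). For "$\leq$": take an optimal basis $\B'$ of $\contr{\matr}{e}$; then $\B'+e$ is a basis of $\matr$, so $\MB{\matr}{x}\leq x(\B')+x(e)=\MB{\contr{\matr}{e}}{x}+x(e)$; separately, take an optimal basis $\B''$ of $\remove{\matr}{e}$ and, using that $\matr$ is loopless, perform one exchange to obtain from $\B''$ a basis of $\matr$ containing $e$ whose $x$-weight is at most $x(\B'')+x(e)$ minus the weight of the element swapped out — choosing that element to be a heaviest element of the relevant fundamental circuit shows $\MB{\matr}{x}\leq\MB{\contr{\matr}{e}}{x}+\minmax{e}{x}$. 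Combining, $\MB{\matr}{x}\leq\MB{\contr{\matr}{e}}{x}+\min\{x(e),\minmax{e}{x}\}=\MB{\contr{\matr}{e}}{x}+\bottl{e}{x}$. For "$\geq$": start from an optimal basis $\B$ of $\matr$. If $e\in\B$, then $\B-e$ is a basis of $\contr{\matr}{e}$, so $\MB{\contr{\matr}{e}}{x}\leq x(\B)-x(e)=\MB{\matr}{x}-x(e)$, and it remains to argue $x(e)\le\minmax{e}{x}$ in this subcase (i.e. that $x(e)$ really equals $\bottl{e}{x}$), which follows from optimality of $\B$ via the fundamental circuit $\Circ{e}{\B}$ for any $e\notin\B$: every circuit through $e$ forces, by the strong circuit exchange axiom against $\B$, an element of $\Cut{e}{\B}\setminus e$ no lighter than $x(e)$. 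If $e\notin\B$, consider the fundamental circuit $\Circ{e}{\B}$; swapping $e$ in for a heaviest element $\eepath$ of $\Circ{e}{\B}-e$ yields a basis $\exchange{\B}{\eepath}{e}$ containing $e$ whose restriction is a basis of $\contr{\matr}{e}$, giving $\MB{\contr{\matr}{e}}{x}\leq\MB{\matr}{x}-x(\eepath)+x(e)$; one then checks $x(\eepath)-x(e)=\minmax{e}{x}-x(e)=\minmax{e}{x}-\bottl{e}{x}$ using that $\B$ optimal forces $\Circ{e}{\B}-e$ to realize the min-max weight (no circuit through $e$ has all of $C-e$ lighter than the heaviest element of $\Circ{e}{\B}-e$, again by circuit exchange and optimality of $\B$). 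Assembling the subcases gives (a).

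For part~(b), I would derive it from (a) together with the recursion already stated in the excerpt, $\MB{\matr}{x}=\min\{\MB{\contr{\matr}{e}}{x}+x(e),\ \MB{\remove{\matr}{e}}{x}\}$. Substituting (a), the first term inside the min becomes $\MB{\matr}{x}-\bottl{e}{x}+x(e)$, which equals $\MB{\matr}{x}$ when $\bottl{e}{x}=x(e)$ and strictly exceeds it otherwise; in either case the recursion forces $\MB{\remove{\matr}{e}}{x}=\MB{\matr}{x}$ when $\bottl{e}{x}<x(e)$ (i.e. when $\minmax{e}{x}<x(e)$, so $\bottl{e}{x}=\minmax{e}{x}$ and the claimed formula reads $\MB{\matr}{x}-\minmax{e}{x}+\minmax{e}{x}=\MB{\matr}{x}$, consistent), and when $\bottl{e}{x}=x(e)$ the claimed formula reads $\MB{\matr}{x}-x(e)+\minmax{e}{x}$, which I verify directly by exhibiting a basis of $\remove{\matr}{e}$ of that weight (take an optimal $\B$ with $e\in\B$ and swap $e$ out for a lightest element of $\Cut{e}{\B}$) and matching lower bound. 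The cleanest route may in fact be to prove (b) independently by the same two-inequality template as (a): "$\leq$" by one exchange starting from an optimal basis of $\matr$ containing $e$ (swap $e$ out for a lightest element of its fundamental cocircuit), and "$\geq$" by extending an optimal basis of $\remove{\matr}{e}$ with $e$ and removing a heaviest element of the resulting fundamental circuit, then reading off the weight change in terms of $\minmax{e}{x}$ and $\bottl{e}{x}$.

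The main obstacle I anticipate is the bookkeeping that identifies the weight of the swapped element with $\minmax{e}{x}$ (resp. $x(e)$) on the nose, rather than merely bounding it — this is where optimality of $\B$ must be used in both directions: optimality gives that the heaviest element of the fundamental circuit $\Circ{e}{\B}$ is a minimizer realizing $\minmax{e}{x}$, and dually that the lightest element of the fundamental cocircuit realizes the relevant cut quantity, and one needs the strong circuit (and cocircuit) exchange axioms to show no other circuit through $e$ can do better. Keeping the two cases $e\in\B$ / $e\notin\B$ and the two regimes $x(e)\lessgtr\minmax{e}{x}$ from proliferating into four unrelated arguments — instead organizing them around the single quantity $\bottl{e}{x}$ as the figure does — is the key to a clean writeup.
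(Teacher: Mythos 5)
Your overall plan is conceptually aligned with the paper's: split on whether $e$ lies in an $x$-optimal basis $\B$, identify $\minmax{e}{x}$ with the weight of the extremal element of the fundamental path $\Path{e}{\B}$ (resp.\ cut $\Cut{e}{\B}$), and read off the optimal values of the two minors from a one-step exchange; the paper just packages exactly this content into its standalone Main Lemma (\cref{lemA}) and then derives the theorem in a few lines. The difference is that the paper proves the ``next best basis'' claims of \cref{lemA}(ii) via the bijective basis exchange (\cref{prop1}(b)), whereas you lean only on symmetric circuit exchange; that is salvageable, but your written steps do not yet close the argument.

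The concrete gap is in the second half of your ``$\leq$'' direction for part~(a). You need $\MB{\matr}{x}\leq\MB{\contr{\matr}{e}}{x}+\minmax{e}{x}$, equivalently a \emph{lower} bound on $\MB{\contr{\matr}{e}}{x}$, but your argument starts from an optimal basis $\B''$ of $\remove{\matr}{e}$ and performs a swap, which only produces an inequality involving $\MB{\fremove{\matr}{e}}{x}$, not $\MB{\contr{\matr}{e}}{x}$ --- the stated conclusion simply does not follow. The correct move is to start from a basis $\B'$ of $\contr{\matr}{e}$ (so $\B'+e$ is a basis of $\matr$ containing $e$), take a circuit $C$ witnessing $\minmax{e}{x}$, and apply \cref{prop3} to get an element $h\in (C-e)\cap\Cut{e}{\B'+e}$; then $\B'+h$ is a basis of $\matr$ with $x(h)\leq\minmax{e}{x}$, which gives the bound. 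The same difficulty reappears, unaddressed, as the ``matching lower bound'' you mention but do not supply in your treatment of part~(b). Two smaller slips: the bound you write, $\MB{\contr{\matr}{e}}{x}\leq\MB{\matr}{x}-x(\eepath)+x(e)$, should not have the $+x(e)$ term (the set $\B-\eepath$ you produce is already a basis of the contraction, of weight $\MB{\matr}{x}-x(\eepath)$), and in the subcase $e\notin\B$ you write $\bottl{e}{x}=x(e)$ when in fact \cref{lemA}(b) gives $\bottl{e}{x}=\minmax{e}{x}$ there.
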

In particular, (a) and (b) yield the equality
$\MB{\fremove{\matr}{e}}{x}-\MB{\contr{\matr}{e}}{x}=\minmax{e}{x}$. Thus,
the min-max weight of an element $e$ in the matroid $\matr$ is
determined by the minimum weights of bases in the submatroids
$\remove{\matr}{e}$ and~$\contr{\matr}{e}$.

\begin{rem}\label{rem:kirchhoff}
  In the special case of \emph{graphic} matroids (see
  \cref{ex:graphic} in \cref{sec:prelim}), \cref{thm:kirchhoff}(a)
  gives us the tropical $(\min,+,-)$ version of the classical
  arithmetic $(+,\times,/)$ effective conductance formula for
  electrical networks proved by Kirchhoff~\cite{kirchhoff} already in
  1847. The \emph{spanning tree polynomial} of an undirected connected
  graph $G$ is $\Kpol{G}{x} = \sum_{T} \prod_{e\in T}x_e$, where the
  sum is over all spanning trees $T$ of $G$. Kirchhoff's formula (see
  also~\cite[Theorem~8]{wagner-lect} for a detailed exposition) states
  that, when the edges $e$ of $G$ are interpreted as electrical
  resistors and their weights $x_e$ as electrical conductances
  (reciprocals of electrical resistances), then the effective
  conductance between the endpoints of any edge $e$ is exactly the
  ratio $\Kpol{G}{x}/\Kpol{\contr{G}{e}}{x}$, where $\contr{G}{e}$ is
  the graph obtained from~$G$ by contracting the edge~$e$.

  In the tropical semifield $(\RR,\min,+,-)$ ``addition'' means taking
  the minimum, ``multiplication'' means adding the numbers, and
  ``division'' turns into subtraction.  In particular, the spanning
  tree polynomial $\Kpol{G}{x}$ of a graph $G$ turns into the tropical
  polynomial $\MB{G}{x}= \min_{T} \sum_{e\in T}x_e$. The function
  computed by $\MB{G}{x}$ is the well-known minimum weight spanning
  tree problem. The \emph{ratio} $\Kpol{G}{x}/\Kpol{\contr{G}{e}}{x}$
  of polynomials in Kirchhoff's formula turns into the
  \emph{difference} $\MB{G}{x}-\MB{\contr{G}{e}}{x}$ of their tropical
  versions. So, a natural question arises: what is the \emph{tropical}
  analogue of the effective conductance between the endpoints of an
  edge $e$? \Cref{thm:kirchhoff}(a) gives the answer (even in general
  matroids): this is exactly the bottleneck weight
  $\bottl{e}{x}=\min\{x(e),\minmax{e}{x}\}$ of~$e$.
\end{rem}

\paragraph{Postoptimality}
If we change the weight of a single ground element $e\in E$, what is
the optimal value $\MB{\matr}{\xz}$ under the new weighting~$\xz$?  We
show that the difference (of the new and the old optimal values) is
determined by the bottleneck weights of the element $e$ under the old
and the new weighting.

\begin{thm}[Postoptimality]\label{thm:postopt}
  Let $e\in E$ and let $x,\xz:E\to\RR$ be weightings that differ only
  in the weights given to the element~$e$. Then
  \[
  \MB{\matr}{\xz}-\MB{\matr}{x}=\bottl{e}{\xz}-\bottl{e}{x}\,.
  \]
\end{thm}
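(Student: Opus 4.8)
The plan is to reduce the statement directly to the Tropical Kirchhoff's formula of \cref{thm:kirchhoff}. The key — and essentially only — observation is that the contracted matroid $\contr{\matr}{e}$ has ground set $E-e$, so its optimal value cannot depend on the weight assigned to the single element~$e$. Since the weightings $x$ and $\xz$ agree on every element of $E-e$, this gives $\MB{\contr{\matr}{e}}{x}=\MB{\contr{\matr}{e}}{\xz}$: the quantity $\tau$ of a matroid under a weighting depends only on the restriction of that weighting to the matroid's ground set.

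From here the proof is a one-line computation. I would apply \cref{thm:kirchhoff}(a) to each of the two weightings, obtaining $\MB{\contr{\matr}{e}}{x}=\MB{\matr}{x}-\bottl{e}{x}$ and $\MB{\contr{\matr}{e}}{\xz}=\MB{\matr}{\xz}-\bottl{e}{\xz}$. Equating the left-hand sides (by the observation above) and rearranging yields $\MB{\matr}{\xz}-\MB{\matr}{x}=\bottl{e}{\xz}-\bottl{e}{x}$, which is exactly the claim. Note that $\matr$ is loopless by hypothesis, so \cref{thm:kirchhoff} applies directly; no assumption on $\contr{\matr}{e}$ itself is needed.

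There is no real obstacle here beyond invoking \cref{thm:kirchhoff} correctly, and the only genuine idea is the ground-set observation that isolates the weight of $e$ on one side. As a sanity check, running the same argument through \cref{thm:kirchhoff}(b) — using that $\remove{\matr}{e}$ also has ground set $E-e$, hence $\MB{\remove{\matr}{e}}{x}=\MB{\remove{\matr}{e}}{\xz}$ — gives $\MB{\matr}{\xz}-\MB{\matr}{x}=\bottl{e}{\xz}-\bottl{e}{x}+\minmax{e}{x}-\minmax{e}{\xz}$. This is consistent with the first derivation because the min-max weight $\minmax{e}{x}=\min_{C\ni e}\max_{\ee\in C-e}x(\ee)$, the minimum over circuits $C$ containing $e$, only ever refers to weights of elements other than $e$; hence $\minmax{e}{x}=\minmax{e}{\xz}$ and the extra two terms cancel.
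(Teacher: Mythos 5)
Your proof is correct and takes exactly the same route as the paper: apply \cref{thm:kirchhoff}(a) to both weightings, use that $\contr{\matr}{e}$ has ground set $E-e$ (so $\MB{\contr{\matr}{e}}{x}=\MB{\contr{\matr}{e}}{\xz}$), and rearrange. The sanity check via \cref{thm:kirchhoff}(b) is a nice extra but is not part of the paper's argument.
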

Since the min-max weight $\minmax{e}{x}$ does not depend on the weight
of the element $e$ itself, we have $\minmax{e}{\xz}=\minmax{e}{x}$
and, hence, $\bottl{e}{\xz}=\min\{\xz(e), \minmax{e}{x}\}$.  Thus,
when a weighting $x:E\to\RR$ and an element $e\in E$ are fixed, and a
new weighting $\xz$ gives weight $\xz(e)=\thr\in\RR$ to the element
$e$, then the behavior of the function
\[
f(\thr):=\MB{\matr}{\xz}-\MB{\matr}{x}=\bottl{e}{\xz}-\bottl{e}{x}=
\min\{\thr,\minmax{e}{x}\} - \min\{x(e),\minmax{e}{x}\}
\]
only depends on whether $\thr\leq \minmax{e}{x}$ or not. If $\thr\leq
\minmax{e}{x}$, then $f(\thr)=\thr-c$ for the constant $c=
\min\{x(e),\minmax{e}{x}\}$. If $\thr\geq \minmax{e}{x}$, then
$f(\thr)$ is constant $0$ or constant $\minmax{e}{x}-x(e)$, whichever
of these two numbers is larger.

If all weights are nonnegative, and if the new weighting $\xz$ gives
weight~$0$ to the element $e$, then
$\bottl{e}{\xz}=\min\{\xz(e),\minmax{e}{\xz}\}=\min\{0,\minmax{e}{\xz}\}=0$,
and \cref{thm:postopt} directly yields the following consequence.

\begin{cor}\label{res:recursion}
  If the weights are nonnegative, and if the weight of a single ground
  element is dropped down to zero, then the minimum weight of a basis
  decreases by exactly the bottleneck weight of this element.
\end{cor}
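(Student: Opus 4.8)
The plan is to obtain \cref{res:recursion} as an immediate specialization of \cref{thm:postopt}. Fix a nonnegative weighting $x:E\to\RR$ and an element $e\in E$, and let $\xz:E\to\RR$ be the weighting that agrees with $x$ on $E-e$ and satisfies $\xz(e)=0$. Since $x$ and $\xz$ differ only in the weight assigned to $e$, \cref{thm:postopt} applies and gives
\[
\MB{\matr}{x}-\MB{\matr}{\xz}=\bottl{e}{x}-\bottl{e}{\xz}\,.
\]
Hence it suffices to show that $\bottl{e}{\xz}=0$; then the right-hand side equals $\bottl{e}{x}$, which is exactly the amount by which the optimal value drops.

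To establish $\bottl{e}{\xz}=0$, I would make two observations. First, in the definition of the min-max weight the inner maximum runs over $C-e$, so the weight of $e$ itself plays no role and $\minmax{e}{\xz}=\minmax{e}{x}$. Second, because the matroid $\matr$ is loopless, the element $e$ lies in at least one circuit and every circuit $C$ containing $e$ satisfies $|C-e|\geq 1$; hence $\minmax{e}{x}$ is the minimum, over a nonempty family of circuits, of a maximum of finitely many of the nonnegative numbers $x(\ee)$, $\ee\in C-e$, and therefore $\minmax{e}{x}\geq 0$. Combining the two observations,
\[
\bottl{e}{\xz}=\min\{\xz(e),\minmax{e}{\xz}\}=\min\{0,\minmax{e}{x}\}=0\,.
\]
Substituting this into the identity displayed above yields $\MB{\matr}{x}-\MB{\matr}{\xz}=\bottl{e}{x}$, which is the assertion.

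I do not expect any genuine obstacle here: once \cref{thm:postopt} is in hand, the corollary is a one-line computation. The only point that deserves a moment's care is the well-definedness and nonnegativity of $\minmax{e}{x}$, which is precisely where the looplessness of $\matr$ (the element $e$ is neither a loop nor a coloop) together with the nonnegativity of $x$ is used; without these hypotheses the quantity $\min\{0,\minmax{e}{x}\}$ need not be $0$.
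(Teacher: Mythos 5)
Your proposal is correct and follows essentially the same route as the paper: the paper also derives the corollary directly from \cref{thm:postopt} by observing that nonnegativity of the weights forces $\minmax{e}{\xz}\geq 0$, hence $\bottl{e}{\xz}=\min\{0,\minmax{e}{\xz}\}=0$. Your added remark that looplessness guarantees $\minmax{e}{x}$ is well-defined is a reasonable bit of extra care but does not change the argument.
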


\Cref{res:recursion} allows us to compute the optimal value
$\MB{\matr}{x}$ under an arbitrary nonnegative weighting $x:E\to\RR_+$
by computing the bottleneck weights of the elements of \emph{any}
(fixed in advance) basis.

\begin{cor}\label{cor:reduction}
  Let $\B=\{e_1,\ldots,e_{r}\}$ be a basis. Given a nonnegative
  weighting $x:E\to\RR_+$, consider the sequence of weightings
  $x_0,x_1,\ldots,x_{r}$, where $x_0=x$, and each next weighting $x_i$
  is obtained from $x$ by setting the weights of the elements
  $e_1,\ldots,e_i$ to zero. Then
  \[
  \MB{\matr}{x}=\bottl{e_1}{x_0}+\bottl{e_2}{x_1}+\cdots
  +\bottl{e_{r}}{x_{r-1}}\,.
  \]
\end{cor}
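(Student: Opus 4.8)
The plan is to obtain \cref{cor:reduction} from \cref{res:recursion} by a straightforward telescoping argument along the sequence $x_0,x_1,\ldots,x_r$. First I would record the key structural observation: for each $i\in\{1,\ldots,r\}$, the two consecutive weightings $x_{i-1}$ and $x_i$ differ \emph{only} in the weight given to the single element $e_i$. Indeed, both assign weight $0$ to $e_1,\ldots,e_{i-1}$, both agree with $x$ outside $\B$, and $x_{i-1}(e_i)=x(e_i)$ whereas $x_i(e_i)=0$. Moreover, since $x$ is nonnegative and each $x_i$ arises from $x$ by zeroing out some weights, every weighting $x_i$ is again nonnegative, so \cref{res:recursion} is applicable at each step.

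Next I would apply \cref{res:recursion} (equivalently, \cref{thm:postopt} with the new weighting giving weight $0$ to $e_i$): passing from $x_{i-1}$ to $x_i$ drops the weight of $e_i$ down to $0$, hence decreases the optimal value by exactly the bottleneck weight of $e_i$ \emph{under the weighting $x_{i-1}$}, that is,
\[
\MB{\matr}{x_i}=\MB{\matr}{x_{i-1}}-\bottl{e_i}{x_{i-1}}\qquad\text{for } i=1,\ldots,r.
\]
Summing these $r$ equalities telescopes to
\[
\MB{\matr}{x_r}=\MB{\matr}{x_0}-\sum_{i=1}^{r}\bottl{e_i}{x_{i-1}}
=\MB{\matr}{x}-\sum_{i=1}^{r}\bottl{e_i}{x_{i-1}}\,.
\]

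It then remains to evaluate $\MB{\matr}{x_r}$. Under $x_r$ every element of the basis $\B$ has weight $0$, so $x_r(\B)=\sum_{i=1}^r x_r(e_i)=0$; since $x_r$ is nonnegative, no basis has negative $x_r$-weight, and therefore $\MB{\matr}{x_r}=0$. Substituting this into the last display gives $\MB{\matr}{x}=\sum_{i=1}^{r}\bottl{e_i}{x_{i-1}}$, which is the claimed identity. I do not expect any real obstacle here; the only point requiring attention is that the bottleneck weights in the sum are computed with respect to the \emph{successively modified} weightings $x_{i-1}$, not the original $x$ — which is precisely what makes the telescoping valid, since the min-max weight $\minmax{e_i}{x_{i-1}}$ (and hence $\bottl{e_i}{x_{i-1}}$) generally changes as the earlier elements $e_1,\ldots,e_{i-1}$ are zeroed out.
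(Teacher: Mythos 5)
Your proposal is correct and follows essentially the same route as the paper's proof: apply \cref{res:recursion} at each step to obtain the recursion $\MB{\matr}{x_i}=\MB{\matr}{x_{i-1}}-\bottl{e_i}{x_{i-1}}$, telescope, and conclude by observing that $\MB{\matr}{x_r}=x_r(\B)=0$ because $x_r$ vanishes on all of $\B$ and the weights are nonnegative. Your extra remarks (that consecutive weightings differ only in $e_i$, that each $x_i$ remains nonnegative, and that the bottleneck weights must be taken with respect to the successively modified weightings) are exactly the hypotheses needed to invoke \cref{res:recursion} at each step and are implicit in the paper's shorter write-up.
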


\begin{proof}
  \Cref{res:recursion} gives us the recursion
  $\MB{\matr}{x_{i}}=\MB{\matr}{x_{i+1}}+\bottl{e_{i+1}}{x_i}$ which
  rolls out into $\MB{\matr}{x}=
  \MB{\matr}{x_{r}}+\bottl{e_{r}}{x_{r-1}}+\cdots+\bottl{e_2}{x_1}+
  \bottl{e_1}{x_0}$. Since the weighting $x_{r}$ gives weight $0$ to
  all elements $e_1,\ldots,e_{r}$ of the basis $\B$, we have
  $x_{r}(\B)=0$. Since the weights are nonnegative, the basis $\B$ is
  of minimum weight under the weighting~$x_r$. Hence,
  $\MB{\matr}{x_{r}}=x_{r}(\B)=0$.
\end{proof}

\paragraph{Sensitivity}
Given a weighting $x:E\to\RR$ the \emph{sensitivity} question is: what
changes of the weights preserve optimality of optimal bases. That is,
if $\xz:E\to\RR$ is a new weighting, under what conditions do
$x$-optimal bases remain $\xz$-optimal? In the case when $\xz$ only
changes the weight of a \emph{single} element $e\in E$, this question
was answered by Libura~\cite{libura} in terms of fundamental circuits
and cuts relative to an $x$-optimal basis (see \cref{rem:libura} in
\cref{sec:sens1}); in the case of graphic matroids (where bases are
spanning trees, see \cref{ex:graphic} in \cref{sec:prelim}), the same
answer was given earlier by Tarjan~\cite{tarjan}.  We answer this
question in terms of the min-max weight $\minmax{e}{x}$
of~$e$. Namely, we associate with every element $e\in E$ its
\emph{tolerance} under the (old) weighting:
\[
\tol{x}{e}:=|\minmax{e}{x}-x(e)|\,.
\]

\begin{thm}[Sensitivity, local change]\label{thm:sens1}
  Let $e\in E$ and let $x,\xz:E\to\RR$ be weightings that only differ
  in the weights given to the element~$e$, and $\B$ be an $x$-optimal
  basis.
  \begin{enumerate}
    \Item{a} If $e\in\B$, then $\B$ is $\xz$-optimal if and
    only if $\xz(e)\leq \minmax{e}{x}$.
    \Item{b} If $e\not\in\B$, then $\B$ is $\xz$-optimal if
    and only if $\xz(e)\geq \minmax{e}{x}$.
    \Item{c} If $|\xz(e)-x(e)|\leq \tol{x}{e}$, then $\B$ is
    $\xz$-optimal.
  \end{enumerate}
\end{thm}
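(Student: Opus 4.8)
The plan is to reduce everything to the Postoptimality formula (\cref{thm:postopt}) together with the two halves of Kirchhoff's formula (\cref{thm:kirchhoff}). Write $m:=\minmax{e}{x}$; since the min-max weight does not involve the weight of $e$ itself, we have $\minmax{e}{\xz}=m$ as well, so $\bottl{e}{x}=\min\{x(e),m\}$ and $\bottl{e}{\xz}=\min\{\xz(e),m\}$. Because $\B$ is $x$-optimal we have $x(\B)=\MB{\matr}{x}$, while $\xz(\B)-x(\B)$ equals $\xz(e)-x(e)$ if $e\in\B$ and equals $0$ if $e\notin\B$. Feeding this into \cref{thm:postopt} written as $\MB{\matr}{\xz}=\MB{\matr}{x}+\bottl{e}{\xz}-\bottl{e}{x}$, and using that $\B$ is $\xz$-optimal exactly when $\xz(\B)=\MB{\matr}{\xz}$, I obtain the clean equivalence: $\B$ is $\xz$-optimal if and only if $\xz(\B)-x(\B)=\bottl{e}{\xz}-\bottl{e}{x}$.

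The next step is a short persistency observation extracted from \cref{thm:kirchhoff}. If $e\in\B$, then $\B-e$ is a basis of $\contr{\matr}{e}$, so $x(\B)-x(e)\geq\MB{\contr{\matr}{e}}{x}=\MB{\matr}{x}-\bottl{e}{x}=x(\B)-\bottl{e}{x}$, whence $x(e)\leq\bottl{e}{x}\leq m$ and therefore $\bottl{e}{x}=x(e)$. Symmetrically, if $e\notin\B$, then $\B$ is a basis of $\remove{\matr}{e}$, so $x(\B)\geq\MB{\remove{\matr}{e}}{x}=\MB{\matr}{x}-\bottl{e}{x}+m=x(\B)-\bottl{e}{x}+m$, whence $\bottl{e}{x}\geq m$ and therefore $\bottl{e}{x}=m$ and $x(e)\geq m$.

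Now (a) and (b) drop out of the equivalence. For (a), with $e\in\B$ it reads $\xz(e)-x(e)=\min\{\xz(e),m\}-x(e)$, i.e. $\xz(e)=\min\{\xz(e),m\}$, i.e. $\xz(e)\leq m$. For (b), with $e\notin\B$ the left-hand side is $0$ and $\bottl{e}{x}=m$, so it reads $0=\min\{\xz(e),m\}-m$, i.e. $\xz(e)\geq m$. Part (c) is then just unwinding the tolerance, using that $\B$ is a basis so either $e\in\B$ or $e\notin\B$: if $e\in\B$ then $\tol{x}{e}=m-x(e)$, and $|\xz(e)-x(e)|\leq m-x(e)$ forces $\xz(e)\leq m$, so (a) applies; if $e\notin\B$ then $\tol{x}{e}=x(e)-m$, and $|\xz(e)-x(e)|\leq x(e)-m$ forces $\xz(e)\geq m$, so (b) applies.

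I expect the only real content — and the step most likely to require care — to be the persistency observation of the second paragraph: it pins down $\bottl{e}{x}$ exactly (equal to $x(e)$ when $e\in\B$, equal to $m$ when $e\notin\B$), which is precisely what makes the two $\min$-terms collapse in the equivalence. Everything else is bookkeeping around the identity from \cref{thm:postopt}. It is worth double-checking the degenerate cases $x(e)=m$ and $\xz(e)=m$, but the non-strict inequalities used throughout already absorb them.
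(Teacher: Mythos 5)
Your proof is correct, and it is interesting that you arranged it to avoid any direct appeal to the paper's Main Lemma (\cref{lemA}). The paper's own proof of \cref{thm:sens1} also routes through \cref{thm:postopt} and uses the same comparison $\xz(\B)=\MB{\matr}{\xz}$, but it invokes \cref{lemA}(a)--(b) \emph{directly} to obtain $\bottl{e}{x}=x(e)$ when $e\in\B$ and $\bottl{e}{x}=\minmax{e}{x}$ when $e\notin\B$. You instead re-derive exactly these two facts (and the accompanying inequalities $x(e)\leq\minmax{e}{x}$ resp.\ $x(e)\geq\minmax{e}{x}$, needed for part (c)) from \cref{thm:kirchhoff}, by comparing the weight of the restricted basis $\B-e$ in $\contr{\matr}{e}$, resp.\ of $\B$ in $\remove{\matr}{e}$, with the optimal values supplied by Kirchhoff's formula; since $\bottl{e}{x}\leq x(e)$ and $\bottl{e}{x}\leq\minmax{e}{x}$ hold by definition, the one-sided bounds you get snap to equalities. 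This is a clean argument that treats \cref{thm:kirchhoff} and \cref{thm:postopt} as black boxes and never has to open the Main Lemma, at the small cost of a slightly longer chain of dependencies (both black-box theorems are themselves proved from \cref{lemA}). The unwinding of the $\min$-terms in parts (a), (b) and the tolerance computation in part (c) match the paper's bookkeeping step for step.
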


By \cref{thm:sens1}(c), the weight of a single element $e$ can be
changed by $\tol{x}{e}$ while preserving optimality.  In contrast, if
we allow the weights of two or more elements to be changed, then only
changes by at most $\tfrac{1}{2}\tol{x}{e}$ preserve optimality.

\begin{thm}[Sensitivity, global change]\label{thm:sens-gen}
  Let $x:E\to\RR$ be a weighting, and $\B$ be an $x$-optimal basis.
  \begin{enumerate}
    \Item{a} If a weighting $\xz:E\to\RR$ satisfies $|\xz(e)-x(e)|\leq
    \tfrac{1}{2}\,\tol{x}{e}$ for all $e\in E$, then the basis $\B$ is
    $\xz$-optimal.
    \Item{b} For every $\epsilon>0$ there is a weighting $\xz:E\to\RR$
    such that $|\xz(e)-x(e)|\leq \tfrac{1}{2}\,\tol{x}{e}+\epsilon$
    holds for all elements $e\in E$ but the basis $\B$ is not
    $\xz$-optimal.
  \end{enumerate}
\end{thm}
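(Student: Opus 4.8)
The plan is to use the \emph{exchange characterization} of optimality (a consequence of the correctness of the greedy algorithm): for a weighting $y\colon E\to\RR$, the basis $\B$ is $y$-optimal iff $y(f)\leq y(e)$ for every \emph{exchange pair} $(e,f)$, meaning $e\in E\setminus\B$ and $f\in\Circ{e}{\B}$ with $f\neq e$ --- equivalently $f\in\B$, $e\in\Cut{f}{\B}$, and $\exchange{\B}{f}{e}$ is again a basis. I will pair this with two one-sided relations, valid for an exchange pair $(e,f)$ whenever $\B$ is $x$-optimal: $x(f)\leq\minmax{e}{x}$ (because $f\in\Circ{e}{\B}\setminus\{e\}$) and, dually, $\minmax{f}{x}\leq x(e)$ (because $e\in\Cut{f}{\B}\setminus\{f\}$). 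Both follow at once from \cref{thm:sens1}: lowering $x(e)$ alone down to $\minmax{e}{x}$ (resp.\ raising $x(f)$ alone up to $\minmax{f}{x}$) leaves $\B$ optimal by part~(b) (resp.\ part~(a)), and then the exchange characterization applied to $(e,f)$ yields the inequality. With $y=x$ the characterization also gives $\minmax{e}{x}\leq x(e)$ for $e\notin\B$ and $x(e)\leq\minmax{e}{x}$ for $e\in\B$, so along an exchange pair $\tol{x}{e}=x(e)-\minmax{e}{x}$ and $\tol{x}{f}=\minmax{f}{x}-x(f)$.

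For part~(a), fix an exchange pair $(e,f)$. The hypothesis gives $\xz(e)\geq x(e)-\tfrac{1}{2}\tol{x}{e}=\tfrac{1}{2}\bigl(x(e)+\minmax{e}{x}\bigr)$ and $\xz(f)\leq x(f)+\tfrac{1}{2}\tol{x}{f}=\tfrac{1}{2}\bigl(x(f)+\minmax{f}{x}\bigr)$. Adding the one-sided relations $x(f)\leq\minmax{e}{x}$ and $\minmax{f}{x}\leq x(e)$ gives $x(f)+\minmax{f}{x}\leq x(e)+\minmax{e}{x}$, whence $\xz(f)\leq\xz(e)$. Since the exchange pair was arbitrary, the exchange characterization shows $\B$ is $\xz$-optimal.

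For part~(b) the point is that the inequality $x(f)+\minmax{f}{x}\leq x(e)+\minmax{e}{x}$ used in~(a) is an \emph{equality} for a well-chosen exchange pair, so perturbing its two endpoints by their half-tolerances plus $\epsilon$ tips the balance. Let $e^*$ be a lightest element of $E\setminus\B$ (nonempty since $\matr$ is loopless; and $\Circ{e^*}{\B}\setminus\{e^*\}\neq\emptyset$ as $\matr$ has no loops), and $f^*$ a heaviest element of $\Circ{e^*}{\B}\setminus\{e^*\}$; then $f^*\in\B$ and $e^*\in\Cut{f^*}{\B}$. Since the fundamental circuit of $e^*$ is one circuit through $e^*$, $\minmax{e^*}{x}\leq x(f^*)$, while the one-sided relation gives $x(f^*)\leq\minmax{e^*}{x}$; so $\minmax{e^*}{x}=x(f^*)$. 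For the other coordinate, the one-sided relation gives $\minmax{f^*}{x}\leq x(e^*)$; conversely, any circuit through $f^*$ must meet the cocircuit $\Cut{f^*}{\B}$ in an element other than $f^*$ (circuit--cocircuit orthogonality), and such an element lies in $\Cut{f^*}{\B}\setminus\{f^*\}\subseteq E\setminus\B$, hence weighs at least $x(e^*)$ by the choice of $e^*$, giving $\minmax{f^*}{x}\geq x(e^*)$; thus $\minmax{f^*}{x}=x(e^*)$. Hence $x(f^*)+\minmax{f^*}{x}=x(f^*)+x(e^*)=\minmax{e^*}{x}+x(e^*)$. Now given $\epsilon>0$, let $\xz$ agree with $x$ except that $\xz(e^*)=x(e^*)-\tfrac{1}{2}\tol{x}{e^*}-\epsilon$ and $\xz(f^*)=x(f^*)+\tfrac{1}{2}\tol{x}{f^*}+\epsilon$. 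Then $|\xz(g)-x(g)|\leq\tfrac{1}{2}\tol{x}{g}+\epsilon$ for all $g\in E$, and a direct computation gives $\xz(f^*)-\xz(e^*)=\tfrac{1}{2}\bigl[(x(f^*)+\minmax{f^*}{x})-(x(e^*)+\minmax{e^*}{x})\bigr]+2\epsilon=2\epsilon>0$; so $\exchange{\B}{f^*}{e^*}$ is a basis strictly lighter than $\B$ under $\xz$, and $\B$ is not $\xz$-optimal.

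The crux is finding the tight exchange pair in~(b): the essential idea is that $e^*$ must be chosen \emph{globally}, as a lightest element of all of $E\setminus\B$ rather than of one fixed circuit --- it is exactly this global minimality, together with $\Cut{f^*}{\B}\setminus\{f^*\}\subseteq E\setminus\B$, that pushes $\minmax{f^*}{x}$ all the way down to $x(e^*)$ and makes the inequality tight. Once the pair is fixed, verifying that $\exchange{\B}{f^*}{e^*}$ is a basis, that the perturbation obeys the stated budget, and the closing arithmetic are all routine; the degenerate case $E\setminus\B=\emptyset$ (which by looplessness forces $E=\emptyset$) is vacuous and can be set aside.
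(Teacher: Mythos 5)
Your proof is correct, and its core is the same as the paper's: the decisive observation is that for any exchange pair $(e,f)$ with $e\notin\B$ and $f\in\Path{e}{\B}$ the quantities $x(f)+\minmax{f}{x}$ and $x(e)+\minmax{e}{x}$ are ordered, so the midpoint $\tfrac12\bigl(x(g)+\minmax{g}{x}\bigr)$ serves as a uniform barrier between the post-perturbation values $\xz(f)$ and $\xz(e)$; and for part~(b) one looks for an exchange pair on which that ordering is an equality. The differences are organizational. For part~(a) you derive the two one-sided relations $x(f)\leq\minmax{e}{x}$ and $\minmax{f}{x}\leq x(e)$ by invoking \cref{thm:sens1} (moving one weight to the min-max value and re-reading the exchange condition), whereas the paper reads them off directly from \cref{lemA} after using \cref{prop4} to produce the offending pair; the two routes are equivalent and of comparable length. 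For part~(b) you choose $e^*$ to be a globally lightest element of $E\setminus\B$ and $f^*$ a heaviest element of $\Path{e^*}{\B}$, and then argue tightness of both relations via circuit--cocircuit orthogonality (\cref{prop3}) together with $\Cut{f^*}{\B}\subseteq E\setminus\B$; the paper instead takes a pair $(e,\ee)$ with $e\in\B$, $\ee\in\Cut{e}{\B}$ minimizing $x(\ee)-x(e)$, which makes $\ee$ lightest in $\Cut{e}{\B}$ and $e$ heaviest in $\Path{\ee}{\B}$ simultaneously, so both equalities fall out of \cref{lemA} at once. Your choice is a hair more explicit and perhaps easier to guess; the paper's has the advantage that the tightness of both coordinates is immediate from a single extremality property, with no extra containment argument needed. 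One small polish: you could replace the cocircuit-orthogonality step by noting that $e^*$, being lightest in $E\setminus\B\supseteq\Cut{f^*}{\B}$ and a member of $\Cut{f^*}{\B}$, is itself a lightest element of $\Cut{f^*}{\B}$, whence $\minmax{f^*}{x}=x(e^*)$ directly by \cref{lemA}(a).
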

Claim~(b) shows that the upper bound in claim~(a) is tight.

\paragraph{Persistency}
Given a weighting $x:E\to\RR$, the set $E$ of ground elements is split
into three (not necessarily nonempty) subsets:
\begin{enumerate}
\item[] $\allE$ = elements  belonging to all $x$-optimal bases;
\item[] $\noneE$ = elements not belonging to any $x$-optimal basis;
\item[] $\someE$ = elements that belong to some but not to all
    $x$-optimal bases.
\end{enumerate}
The \emph{persistency problem} is to determine this partition.
Elements $e\in \allE\cup\noneE$ are called \emph{persistent}.  Knowing
which ground elements belong to which of these three subsets may be
helpful when constructing an optimal basis. Namely, we can contract
all elements of $\allE$ (that is, include them into the solution),
remove all elements of $\noneE $, and try to extend our partial
solution $\allE$ to an optimal basis by only treating the elements
of~$\someE$.

Cechl\'arov\'a and Lacko~\cite{lacko} characterized the sets $\allE$
and $\noneE$ in terms of the rank function of the underlying matroid:
$e\in \allE$ iff removing $e$ from the set of all elements \emph{not
  heavier} than $e$ decreases the rank of this set, and $e\in \noneE$
iff adding $e$ to the set of all elements \emph{lighter} than $e$
leaves the rank of this set unchanged.  We characterize these sets in
terms of min-max weights of ground elements.

\begin{thm}[Persistency]
  \label{thm:persist}
  Let $x:E\to\RR$ be a weighting, and $e\in E$ be a ground element.
  \begin{enumerate}
    \Item{1} $e\in\allE$ if and only if $\minmax{e}{x}> x(e)$;
    \Item{2} $e\in\noneE$ if and only if $\minmax{e}{x} < x(e)$;
    \Item{3} $e\in\someE$ if and only if $\minmax{e}{x}=x(e)$.
    \Item{4} If all weights are distinct, then $\B=\{e\in E\colon
    \minmax{e}{x} > x(e)\}$ is the unique optimal basis.
  \end{enumerate}
\end{thm}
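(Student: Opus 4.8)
The plan is to reduce every assertion to \cref{thm:kirchhoff}. First I would record the precise correspondence between optimal bases of $\matr$ and optimal bases of its two minors, using that $\matr$ is loopless in the strong sense adopted here (no loops \emph{and} no coloops). Because $\{e\}$ is independent, contracting $e$ drops the rank by exactly one, so the bases of $\contr{\matr}{e}$ are exactly the sets $\B-e$ with $\B$ a basis of $\matr$ containing~$e$; because $e$ is not a coloop, $r(E-e)=r(\matr)$, so the bases of $\fremove{\matr}{e}$ are exactly the bases of $\matr$ avoiding~$e$. Combined with the recursion $\MB{\matr}{x}=\min\{\MB{\contr{\matr}{e}}{x}+x(e),\,\MB{\fremove{\matr}{e}}{x}\}$ already noted in the text, this yields: (i)~some $x$-optimal basis contains $e$ iff $\MB{\contr{\matr}{e}}{x}+x(e)=\MB{\matr}{x}$; and (ii)~some $x$-optimal basis avoids $e$ iff $\MB{\fremove{\matr}{e}}{x}=\MB{\matr}{x}$.

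Next I would substitute the formulas of \cref{thm:kirchhoff}. In (i), $\MB{\contr{\matr}{e}}{x}+x(e)=\MB{\matr}{x}-\bottl{e}{x}+x(e)$ equals $\MB{\matr}{x}$ iff $x(e)=\bottl{e}{x}=\min\{x(e),\minmax{e}{x}\}$, i.e.\ iff $x(e)\leq\minmax{e}{x}$. In (ii), $\MB{\fremove{\matr}{e}}{x}=\MB{\matr}{x}-\bottl{e}{x}+\minmax{e}{x}$ equals $\MB{\matr}{x}$ iff $\minmax{e}{x}=\min\{x(e),\minmax{e}{x}\}$, i.e.\ iff $\minmax{e}{x}\leq x(e)$. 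Hence $e$ lies in some optimal basis exactly when $x(e)\leq\minmax{e}{x}$, and $e$ is missing from some optimal basis exactly when $x(e)\geq\minmax{e}{x}$. Complementing these statements gives claim~(2) ($e\in\noneE$ iff $\minmax{e}{x}<x(e)$) and claim~(1) ($e\in\allE$ iff $\minmax{e}{x}>x(e)$); claim~(3) then follows from $\someE=E\setminus(\allE\cup\noneE)$, which by~(1)--(2) is precisely the set of $e$ with $\minmax{e}{x}=x(e)$.

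For claim~(4): by definition $\minmax{e}{x}=x(\ee)$ for some element $\ee\neq e$ (a heaviest element of $C-e$ in a circuit $C$ attaining the minimum), so if all weights are distinct then $\minmax{e}{x}\neq x(e)$ for every $e$, whence $\someE=\varnothing$ by claim~(3). Then every ground element is persistent, so any two $x$-optimal bases agree on each element; thus the $x$-optimal basis is unique, and by claim~(1) it equals $\{e\in E\colon\minmax{e}{x}>x(e)\}$.

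The substantive step is the bases-of-minors correspondence in the first paragraph; this is exactly where strong looplessness is needed (a coloop $e$ would make $\fremove{\matr}{e}$ of strictly smaller rank, so its bases would no longer be bases of $\matr$, and claim~(1) could fail). Everything after that is the short case analysis of $\min\{x(e),\minmax{e}{x}\}$ against $x(e)$ and against $\minmax{e}{x}$, so no further obstacle is expected.
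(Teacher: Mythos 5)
Your proof is correct, and it takes a genuinely different route from the paper's. The paper proves all four claims directly from the Main Lemma (\cref{lemA}): for~(1) it takes an optimal basis $\B\ni e$, invokes \cref{lemA}(a) to get $x(e)\leq\minmax{e}{x}=x(\eecut)$ for a lightest $\eecut\in\Cut{e}{\B}$, and observes that equality would make $\exchange{\B}{e}{\eecut}$ an optimal basis avoiding~$e$; claim~(2) is symmetric via \cref{lemA}(b), and~(3),~(4) follow. You instead treat \cref{thm:kirchhoff} as a black box: you first establish the bases-of-minors correspondence (using that $e$ is neither a loop nor a coloop) and the resulting equivalences ``some optimal basis contains $e$ iff $\MB{\contr{\matr}{e}}{x}+x(e)=\MB{\matr}{x}$'' and ``some optimal basis avoids $e$ iff $\MB{\fremove{\matr}{e}}{x}=\MB{\matr}{x}$,'' then plug in the formulas of \cref{thm:kirchhoff} and read off the dichotomy by comparing $\bottl{e}{x}$ with $x(e)$ and with $\minmax{e}{x}$. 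Both arguments are sound; yours is arguably cleaner in that it never re-opens the fundamental-circuit/cut machinery, at the price of an explicit appeal to the (standard but worth stating, as you do) identification of bases of $\contr{\matr}{e}$ and $\fremove{\matr}{e}$ with bases of $\matr$, whereas the paper's direct proof additionally exhibits a concrete alternative optimal basis ($\exchange{\B}{e}{\eecut}$ or $\exchange{\B}{\eepath}{e}$) witnessing non-membership in $\allE$ or $\noneE$. Your argument for~(4) — distinct weights force $\minmax{e}{x}\neq x(e)$, hence $\someE=\varnothing$, hence uniqueness — is a nice alternative to the paper's basis-exchange argument for uniqueness.
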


\paragraph{Organization} In \cref{sec:prelim}, we briefly recall main
matroid concepts and results used in this paper. \Cref{sec:main-lemma}
is devoted to the proof of our main technical tool
(\cref{lemA}). Given \cref{lemA}, the proofs of
\cref{thm:kirchhoff,thm:persist,thm:postopt,thm:sens1,thm:sens-gen}
are fairly simple, and are given in the subsequent
\cref{sec:contr,sec:postopt,sec:sens1,sec:sens-gen,sec:persistency}.

\section{Preliminaries}
\label{sec:prelim}

We use standard matroid terminology as, for example, in Oxley's
book~\cite{oxley}. A \emph{matroid} on a finite set $E$ of
\emph{ground elements} is a pair $\matr=(E,\f)$, where $\f\subseteq
2^E$ is a nonempty downward closed collection of subsets of $E$,
called \emph{independent sets}, with the \emph{augmentation property}:
whenever $I$ and $J$ are independent sets of cardinalities $|I|<|J|$,
there is an element $e\in J\setminus I$ such that the set $I+e$ is
independent; as customary, we abbreviate $I\cup\{e\}$ to $I+e$ and
write $J-e$ for $J\setminus\{e\}$.

\paragraph{Bases and circuits}
An independent set is a \emph{basis} if it is contained in no other
independent set. The augmentation property implies that all bases have
the same cardinality. This property also yields the \emph{basis
  exchange axiom}: if $A$ and $B$ are bases, then for every element
$e\in A$ there is an element $\ee\in B$ such that
$\exchange{A}{e}{\ee}$ is a basis.  The following two important
refinements of the basis exchange axiom are known as the
\emph{symmetric basis exchange} and the \emph{bijective basis
  exchange}.

\begin{prop}[Brualdi~\cite{brualdi}, Brylawski~\cite{brylawski}]\label{prop1}
  Let $\exB$ and $\B$ be bases.
  \begin{enumerate}
    \Item{a} For every $e\in\exB$ there is an $\ee\in \B$ such that
    both $\exchange{\exB}{e}{\ee}$ and $\exchange{\B}{\ee}{e}$ are
    bases.
    \Item{b} There is a bijection $\bij:\exB\to \B$ such that the set
    $\exchange{\exB}{e}{\bij(e)}$ is a basis for every $e\in \exB$.
  \end{enumerate}
\end{prop}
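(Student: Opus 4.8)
The final statement to prove is Proposition~\ref{prop1}, the symmetric and bijective basis exchange properties. Since parts (a) and (b) are classical (Brualdi, Brylawski), the natural route is to prove (a) first and then bootstrap (b) from it via a matching argument.

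For part (a), the plan is as follows. Fix bases $\exB$ and $\B$ and an element $e\in\exB$. If $e\in\B$, we may take $\ee=e$, so assume $e\notin\B$. Consider the fundamental circuit $\Circ{e}{\B}$ of $e$ with respect to $\B$, that is, the unique circuit contained in $\B+e$. By the basis exchange axiom applied in one direction, for every $\ee\in\Circ{e}{\B}-e\subseteq\B$ the set $\exchange{\B}{\ee}{e}$ is a basis. The real work is to show that at least one such $\ee$ also satisfies that $\exchange{\exB}{e}{\ee}$ is a basis, i.e. $\ee\notin\exB$ and $(\exB-e)+\ee$ is independent of full rank. The standard device here is to consider the fundamental cocircuit $\Cut{e}{\exB}$ of $e$ with respect to $\exB$ (the set of elements $f$ such that $\exchange{\exB}{e}{f}$ is a basis, together with $e$ itself), and to invoke the fact that a circuit and a cocircuit cannot meet in exactly one element. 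Since $e$ lies in both $\Circ{e}{\B}$ and $\Cut{e}{\exB}$, their intersection has size at least two, so there is some $\ee\neq e$ lying in both; this $\ee$ is simultaneously valid for the two exchanges. The main obstacle is marshalling exactly which of the several equivalent characterizations of fundamental circuit/cocircuit to use, and checking that ``$\ee\in\Cut{e}{\exB}-e$'' is genuinely equivalent to ``$\exchange{\exB}{e}{\ee}$ is a basis'' — this is where a careful appeal to Oxley's book is needed rather than a slick one-liner.

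For part (b), the plan is to build a bipartite graph $H$ on vertex classes $\exB$ and $\B$, placing an edge between $e\in\exB$ and $\ee\in\B$ whenever $\exchange{\exB}{e}{\ee}$ is a basis (with loops $e\to e$ whenever $e\in\exB\cap\B$). A bijection $\bij$ with the desired property is precisely a perfect matching in $H$. To produce one, I would verify Hall's condition: for every subset $S\subseteq\exB$, the neighbourhood $N(S)\subseteq\B$ satisfies $|N(S)|\geq|S|$. This follows from a submodularity/rank argument — the set $(\exB\setminus S)\cup\B$ has full rank, and any element of $S$ can be exchanged into $\exB\setminus S$ only using elements in the closure of $(\exB\setminus S)\cup N(S)$; counting ranks forces $|N(S)|\geq|S|$. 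An alternative, perhaps cleaner, derivation is to iterate part (a): repeatedly pick $e\in\exB$, find the symmetric partner $\ee\in\B$, set $\bij(e)=\ee$, and recurse on the smaller bases $\exchange{\exB}{e}{\ee}\cap\cdots$ — but making this recursion actually terminate with a \emph{global} bijection (rather than just a sequence of local exchanges) requires the Hall-type argument anyway, so I expect the matching formulation to be the honest one. The hard part of (b) is verifying Hall's condition cleanly; everything after that is an invocation of Hall's marriage theorem.

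Since both parts are standard and the paper only needs them as tools, I would keep the exposition brief and cite Brualdi~\cite{brualdi} and Brylawski~\cite{brylawski} (and Oxley~\cite{oxley} for the circuit–cocircuit intersection lemma) rather than reproving everything from first principles.
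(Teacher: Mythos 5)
The paper itself does not prove this proposition at all: it is stated with citations to Brualdi and Brylawski, which is exactly what you conclude you would do in your final sentence. So there is no in-paper argument to compare against; the only question is whether your sketch would survive contact with the details, and it would. For part (a) the circuit--cocircuit argument is the standard short proof: when $e\in\exB\setminus\B$, both the fundamental circuit $\Circ{e}{\B}$ and the fundamental cocircuit $\Cut{e}{\exB}+e$ contain $e$; since a circuit and a cocircuit never meet in exactly one element, some $\ee\neq e$ lies in both, so $\ee\in\Path{e}{\B}$ makes $\exchange{\B}{\ee}{e}$ a basis and $\ee\in\Cut{e}{\exB}$ makes $\exchange{\exB}{e}{\ee}$ a basis, via \cref{prop2}. (One tiny slip: the fact that $\exchange{\B}{\ee}{e}$ is a basis for $\ee\in\Circ{e}{\B}-e$ comes from the unique circuit property, \cref{prop2}(b), not from the basis exchange axiom; immaterial for the argument.) For part (b) the Hall route is sound, and the rank count you gesture at does close: for $S\subseteq\exB$, every $\ee\in\B\setminus N(S)$ is spanned by $\exB\setminus S$, because either $\ee\in\exB\setminus S$ (if $\ee\in S$ the loop $\ee\mapsto\ee$ would put it in $N(S)$), or $\ee\notin\exB$ and $\Path{\ee}{\exB}\cap S=\emptyset$ so $\Circ{\ee}{\exB}\subseteq(\exB\setminus S)+\ee$. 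Since $\B\setminus N(S)$ is independent and $\exB\setminus S$ has rank $|\exB|-|S|$, one gets $|\B\setminus N(S)|\leq|\exB|-|S|$, hence $|N(S)|\geq|S|$. Your caution about the purely iterative alternative is warranted; the Hall formulation is the clean way to obtain a global bijection.
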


A subset of $E$ is \emph{dependent} if it is not independent. A
\emph{circuit} is a dependent set whose proper subsets are all
independent.  For a ground element $e$, an $e$-\emph{circuit} is a
circuit containing~$e$. An element $e$ is a \emph{loop} if the set
$\{e\}$ is dependent, and is a \emph{coloop} if $e$ belongs to all
bases.  To avoid pathological situations, we assume that our matroid
is \emph{loopless}: no ground element is a loop or a coloop.  We only
need this assumption to ensure two properties: every circuit contains
at least two elements, and for every ground element $e$ at least one
$e$-circuit exists.

\paragraph{Fundamental paths and cuts}
Let $\B$ be a basis, and $e\in E$ a ground element. If $e\not\in\B$,
then the set $\B+e$ must contain at least one $e$-circuit, because
$\B$ is independent but $\B+e$ is dependent. An important fact, shown
by Brualdi~\cite[Lemma~1]{brualdi} (see also
Oxley~\cite[Proposition~1.1.4]{oxley}) and known as the \emph{unique
  circuit property}, is that the set $\B+e$ contains a \emph{unique}
circuit $C$. Since $\B$ is independent, this circuit $C$ is an
$e$-circuit (that is, $C$ contains $e$).  This unique circuit
$C=\Circ{e}{\B}$ is known as the \emph{fundamental circuit of $e$
  relative to~$\B$}. Motivated by graphic matroids
(cf. \cref{ex:graphic}), we call the independent set
\[
\Path{e}{\B}:=\Circ{e}{\B}-e\subseteq\B
\]
the \emph{fundamental path of $e$ relative to~$\B$}.  If $e\in\B$ is a
basis element, then the set
\[
\Cut{e}{\B}:=\{\ee\in E\setminus\B\colon e\in \Circ{\ee}{\B}\}
\]
is known as the \emph{fundamental cut of $e$ relative to~$\B$}. Note
that $e\not\in\Path{e}{\B}$ and $\ee\not\in\Cut{\ee}{\B}$. Also note
the duality: if $e\not\in\B$ and $\ee\in\B$, then
\[
\mbox{$e\in\Cut{\ee}{\B}$ if and only if $\ee\in\Path{e}{\B}$.}
\]

The unique circuit property yields the following equivalent definition
of $\Cut{e}{\B}$ and $\Path{e}{\B}$.

\begin{prop}\label{prop2}
  Let $\B$ be a basis, and $e\in E$ a ground element.
  \begin{enumerate}
    \Item{a} If $e\in \B$, then $\Cut{e}{\B}=\left\{\ee\in
      E\setminus\B\colon \mbox{$\exchange{\B}{e}{\ee}$ is a
        basis}\right\}$.
    \Item{b} If $e\not\in \B$, then
    $\Path{e}{\B}=\left\{\ee\in\B\colon \mbox{$\exchange{\B}{\ee}{e}$
        is a basis}\right\}$.
  \end{enumerate}
\end{prop}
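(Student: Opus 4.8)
The plan is to establish part~(b) directly from the unique circuit property, and then deduce part~(a) from part~(b) using the duality between fundamental paths and fundamental cuts recorded just before the statement.

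For part~(b), I would fix $e\notin\B$ and let $C=\Circ{e}{\B}$ be the unique circuit contained in $\B+e$, so that $e\in C$ and $\Path{e}{\B}=C-e$. Take an arbitrary $\ee\in\B$. Since $\ee\in\B$ and $e\notin\B$, the exchanged set $\exchange{\B}{\ee}{e}=(\B+e)-\ee$ has the same cardinality as the basis $\B$; hence it is a basis precisely when it is independent, that is, precisely when it contains no circuit. The crucial observation is that any circuit contained in $(\B+e)-\ee\subseteq\B+e$ is also a circuit contained in $\B+e$, and therefore must equal $C$ by the unique circuit property. Consequently, $(\B+e)-\ee$ is dependent if and only if $C\subseteq(\B+e)-\ee$, which in turn holds if and only if $\ee\notin C$. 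So $\exchange{\B}{\ee}{e}$ is a basis if and only if $\ee\in C$; as $\ee\in\B$ forces $\ee\neq e$, this is the same as $\ee\in C-e=\Path{e}{\B}$, which proves~(b).

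For part~(a), I would fix $e\in\B$ and take any $\ee\in E\setminus\B$. Unfolding the definitions, $\ee\in\Cut{e}{\B}$ means $e\in\Circ{\ee}{\B}$, which (since $e\in\B$ and $\ee\notin\B$ force $e\neq\ee$) is equivalent to $e\in\Circ{\ee}{\B}-\ee=\Path{\ee}{\B}$. Now apply part~(b) to the element $\ee\notin\B$: it gives $e\in\Path{\ee}{\B}$ if and only if $\exchange{\B}{e}{\ee}$ is a basis. Chaining these equivalences yields exactly the claimed description of $\Cut{e}{\B}$.

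I do not expect a genuine obstacle here; the argument is short. The only points requiring a little care are invoking the standard fact that an independent set whose cardinality equals that of a basis is itself a basis, and keeping track of the fact that the element removed in an exchange is distinct from the one added (so that the stated duality applies cleanly and the reduction of~(a) to~(b) goes through).
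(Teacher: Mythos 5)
Your proof is correct and follows essentially the same route as the paper: part~(b) is established directly from the unique circuit property, and part~(a) is then deduced via the duality $e\in\Cut{\ee}{\B}\Leftrightarrow\ee\in\Path{e}{\B}$. The paper phrases the core of~(b) as a two-case argument ($\ee\in\Path{e}{\B}$ versus $\ee\notin\Path{e}{\B}$) rather than your single chain of equivalences, but the content is the same.
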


That is, if $e\in\B$, then $\Cut{e}{\B}$ consists of all elements
$\ee\in E\setminus\B$ from outside the basis $\B$ that \emph{can
  replace} $e$ in~$\B$.  If $e\not\in\B$, then $\Path{e}{\B}$ consists
of all basis elements $\ee\in\B$ that \emph{can be replaced} by $e$ in
$\B$.

\begin{proof}
  Claim (a) follows from claim (b) and the aforementioned duality.  To
  show claim (b), let $e\not\in\B$ and $P=\Path{e}{\B}$; hence,
  $P\subseteq\B$. Take an arbitrary element $\ee\in\B$. If $\ee\not\in
  P$, then $\exchange{\B}{\ee}{e}$ cannot be a basis because it
  contains the circuit $P+e$. If $\ee\in P$, then
  $\exB=\exchange{\B}{\ee}{e}$ is a basis because $\ee$ is removed
  from the unique circuit $P+e$ contained in~$\B+e$ (the set $\exB$ is
  independent and has the same cardinality as~$\B$).
\end{proof}

\begin{rem}
  If $e\not\in\B$, then $\Path{e}{\B}$ is nonempty, because $e$ is not
  a loop (the set $\{e\}$ is independent).  If $e\in\B$, then
  \cref{prop2} implies that the set $\Cut{e}{\B}$ is also nonempty.
  Indeed, since $e$ is not a coloop (does not belong to all bases),
  $e\not\in\exB$ holds for some basis $\exB\neq\B$. By the basis
  exchange axiom, there is some element $\ee\in \exB\setminus\B$ such
  that $\exchange{\B}{e}{\ee}$ is a basis. By \cref{prop2}(b), the
  element $\ee$ belongs to the set $\Cut{e}{\B}$; hence,
  $\Cut{e}{\B}\neq\emptyset$.
\end{rem}

\begin{prop}\label{prop3}
  Let $\B$ be a basis, $e\in \B$ and $C$ an $e$-circuit. Then
  $(C-e)\cap\Cut{e}{\B}\neq\emptyset$.
\end{prop}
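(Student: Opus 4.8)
The plan is to reduce the claim to exhibiting a single element of $C-e$ that can play the role of $e$ in the basis $\B$, and then to produce that element by a span argument. By \cref{prop2}(a), $\Cut{e}{\B}$ is exactly the set of elements $\ee\in E\setminus\B$ for which $\exchange{\B}{e}{\ee}$ is a basis, so it suffices to find $\ee\in C-e$ such that $(\B-e)+\ee$ is independent. Any such $\ee$ is automatically outside $\B$: we have $\ee\neq e$ since $\ee\in C-e$, and $\ee\in\B$ would force $\ee\in\B-e$, so that $(\B-e)+\ee=\B-e$ is a proper subset of the basis $\B$ and hence not itself a basis; thus independence of $(\B-e)+\ee$ forces $\ee\notin\B$, and then, having the cardinality of $\B$, the set $(\B-e)+\ee$ is a basis, witnessing $\ee\in\Cut{e}{\B}\cap(C-e)$.

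To find such an $\ee$, I would argue by contradiction. Suppose $(\B-e)+\ee$ is dependent for every $\ee\in C-e$. Since $\B-e$ is independent, this already forces $C-e$ to be disjoint from $\B$, and then for each $\ee\in C-e$ the dependence of $(\B-e)+\ee$ means $\ee\in\mathrm{cl}(\B-e)$; hence $C-e\subseteq\mathrm{cl}(\B-e)$ and therefore $\mathrm{cl}(C-e)\subseteq\mathrm{cl}(\B-e)$. On the other hand, $C$ is a circuit with $e\in C$, so $C-e$ is independent while $C=(C-e)+e$ is dependent; hence $e\in\mathrm{cl}(C-e)$. Combining the two inclusions gives $e\in\mathrm{cl}(\B-e)$.

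This is a contradiction: $e\in\B$ and $\B$ is a basis, so $\B-e$ is independent and $(\B-e)+e=\B$ is independent as well, which means $e\notin\mathrm{cl}(\B-e)$. Therefore the element $\ee$ exists, and the proof is complete. The only point needing a little care is the passage from ``$(\B-e)+\ee$ dependent for all $\ee\in C-e$'' to ``$C-e\subseteq\mathrm{cl}(\B-e)$'': phrased in terms of the rank function rather than the closure operator, it uses only the elementary fact that adjoining to an independent set a collection of elements each of which is individually spanned by it does not raise the rank. Everything else is a direct application of \cref{prop2}(a) and the definition of a circuit.
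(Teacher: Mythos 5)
Your closure-operator approach is genuinely different from the paper's proof, which extends $C-e$ to a basis $\exB$ avoiding $e$, invokes symmetric basis exchange (\cref{prop1}(a)), and then uses uniqueness of fundamental circuits to identify $\Path{e}{\exB}$ with $C-e$. Your central idea is sound and more elementary: if every element of $C-e$ were spanned by $\B-e$, then $e\in\mathrm{cl}(C-e)\subseteq\mathrm{cl}(\B-e)$, contradicting the independence of $\B$.

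However, there is a genuine gap in the write-up, and it is not merely cosmetic. The sentence ``Any such $\ee$ is automatically outside $\B$: \dots thus independence of $(\B-e)+\ee$ forces $\ee\notin\B$'' is false. If $\ee\in(C-e)\cap(\B-e)$ --- which certainly can happen, for instance when the circuit $C$ shares several elements with the basis --- then $(\B-e)+\ee=\B-e$ \emph{is} independent, yet $\ee\in\B$ and so $\ee\notin\Cut{e}{\B}$. As a result, the conclusion of your contradiction argument, namely that some $\ee\in C-e$ has $(\B-e)+\ee$ independent, does not deliver an element of $\Cut{e}{\B}$: it may hand you back an element of $\B-e$. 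Put differently, the supposition you negate (``$(\B-e)+\ee$ is dependent for \emph{every} $\ee\in C-e$'') is the wrong one: whenever $(C-e)\cap\B\neq\emptyset$, that supposition is already vacuously false because of those common elements, and the derived ``contradiction'' tells you nothing new.

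The repair is small and leaves your closure argument intact. Suppose instead that $(\B-e)+\ee$ is dependent for every $\ee\in(C-e)\setminus\B$; this is the correct negation of the statement ``$\exists\,\ee\in(C-e)\setminus\B$ with $\exchange{\B}{e}{\ee}$ a basis,'' which by \cref{prop2}(a) is exactly what you need. Each such $\ee$ then lies in $\mathrm{cl}(\B-e)$ because $\B-e$ is independent; and each $\ee\in(C-e)\cap\B$ lies in $\B-e\subseteq\mathrm{cl}(\B-e)$ trivially. Hence $C-e\subseteq\mathrm{cl}(\B-e)$, and the rest of your argument ($e\in\mathrm{cl}(C-e)\subseteq\mathrm{cl}(\B-e)$, contradicting $e\notin\mathrm{cl}(\B-e)$) closes out the proof. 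The element produced then genuinely witnesses $(C-e)\cap\Cut{e}{\B}\neq\emptyset$.
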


\begin{proof}
  Let $C$ be an $e$-circuit. Since the set $I=C-e$ is independent, it
  lies in some basis $\exB$, and $e\not\in\exB$ holds since $I+e=C$ is
  already dependent. By \cref{prop1}(a), there is an $\ee\in \exB$
  such that both sets $\exchange{\B}{e}{\ee}$ and
  $\exchange{\exB}{\ee}{e}$ are bases.  By \cref{prop2}, this is
  equivalent to $\ee\in \Cut{e}{\B}$ and $\ee\in\Path{e}{\exB}$.
  Thus, $\Cut{e}{\B}\cap\Path{e}{\exB}\neq\emptyset$.  Since $\exB$ is
  a basis, and both circuits $C$ and $\Path{e}{\exB}+e$ lie in
  $\exB+e$, the uniqueness of fundamental circuits yields
  $\Path{e}{\exB}=C-e$. Hence, $\Cut{e}{\B}\cap (C-e)\neq\emptyset$,
  as claimed.
\end{proof}

\paragraph{Min-max and bottleneck weights of elements}

A \emph{weighting} is an assignment $x:E \to \RR$ of real weights to
the ground elements.  The \emph{weight} of a set $F\subseteq E$ is the
sum $ x(F):=\sum_{\ee\in F}x(\ee) $ of the weights of its
elements. The \emph{minimum weight basis problem} on a matroid
$\matr=(E,\f)$ is, given a weighting $x: E \to \RR$, to determine the
minimum weight of a basis:
\[
\MB{\matr}{x}:=\min_{B~\mathrm{basis}}\ \sum_{\ee\in B}x(\ee)\,.
\]
We call the number $\MB{\matr}{x}$ the \emph{optimal value} (under the
weighting~$x$), and call a basis $\B$ $x$-\emph{optimal} (or just
\emph{optimal}, if the weighting is clear from the context) if
$x(\B)=\MB{\matr}{x}$ holds, that is, if the basis $\B$ is of minimal
$x$-weight.

\begin{rem}
  Note that in the context of the minimum weight basis problem our
  assumption that no ground element $e$ is a loop or a coloop is quite
  natural. If $e$ is a loop, then it belongs to \emph{none} of the
  bases, and the element $e$ contributes nothing to the optimal value
  $\MB{\matr}{x}$. If $e$ is a coloop, then it belongs to \emph{all}
  bases, and the contribution $x(e)$ of the element $e$ to the optimal
  value $\MB{\matr}{x}$ is predetermined.
\end{rem}

The \emph{min-max weight} $\minmax{e}{x}$ of an element $e\in E$ under
a weighting $x:E\to\RR$ is the minimum, over all $e$-circuits $C$, of
the maximum weight of an element in the (independent) set $C-e$:
\[
\minmax{e}{x}:= \min_{C~\ecirc}\ \max_{\ee\in C-e}\ x(\ee)\,.
\]
Since $e$ is not a loop (the set $\{e\}$ is independent), the set
$C-e$ is nonempty for every $e$-circuit~$C$. Moreover, since $e$ is
not a coloop, at least one $e$-circuit $C$ exists. So, the min-max
weight is well-defined.  Note that the min-max weight $\minmax{e}{x}$
of $e$ does not depend on the weight $x(e)$ of the element $e$ itself:
it only depends on the weights of the remaining elements. So, all
three relations $\minmax{e}{x}<x(e)$, $\minmax{e}{x}=x(e)$ and
$\minmax{e}{x}>x(e)$ are possible. We call
\[
\bottl{e}{x}:=\min\{x(e),\minmax{e}{x}\}
\]
the \emph{bottleneck weight} of~$e$.

\begin{rem}\label{rem:edmonds}
  Thanks to a classical ``bottleneck extrema'' result of Edmonds and
  Fulkerson~\cite{edmonds-fulkerson}, the min-max weight
  $\minmax{e}{x}$ of a ground element $e\in E$ has an equivalent
  definition as the ``max-min'' weight in terms of cocircuits:
  \begin{equation}\label{eq:max-min}
    \minmax{e}{x}=\max_{D~\ecocirc}\ \min_{\ee\in D-e}\ x(\ee)\,.
  \end{equation}
  A set $D\subseteq E$ is a \emph{cocircuit} if it intersects every
  basis, and no proper subset has this property.  In other words,
  cocircuits in a matroid $\matr$ are circuits of the dual matroid
  $\matr^*$; bases in $\matr^*$ are complements of the bases of
  $\matr$. An $e$-\emph{cocircuit} is a cocircuit containing the
  element~$e$. To obtain \cref{eq:max-min} from the main theorem of
  Edmonds and Fulkerson~\cite{edmonds-fulkerson}, one has only to
  verify that the family $\hm=\{D-e\colon \mbox{$D$ is an
    $e$-cocircuit}\}$ is a blocking family for the family
  $\fm=\{C-e\colon \mbox{$C$ is an $e$-circuit}\}$, i.e., that members
  of $\hm$ are minimal (under set inclusion) subsets of $E$
  intersecting all members of $\fm$. This follows from a well-known
  fact that $|C\cap D|\neq 1$ holds for every circuit $C$ and every
  cocircuit~$D$ (see, for
  example~\cite[Proposition~2.1.11]{oxley}). In this paper, we will
  not use this equivalent ``max-min'' definition \cref{eq:max-min} of
  $\minmax{e}{x}$: we only mention it for interested readers.
\end{rem}

\begin{ex}
  \label{ex:graphic}
  The \emph{graphic matroid} (or \emph{cycle matroid}) $\matr(G)$
  determined by an undirected connected graph $G=(V,E)$ has edges of
  $G$ as its ground elements. Independent sets are forests, bases are
  spanning trees of $G$, and circuits are simple cycles in~$G$. A loop
  is an edge with identical endpoints, and a coloop is an edge $e$
  whose deletion destroys the connectivity of $G$ (such edges are also
  called \emph{bridges}).  The min-max weight $\minmax{e}{x}$ of an
  edge $e\in E$ is the minimum, over all simple paths in $G$ of length
  at least two between the endpoints of $e$, of the maximum weight of
  an edge in this path.  The bottleneck weight
  $\bottl{e}{x}=\min\left\{x(e),\minmax{e}{x}\right\}$ of an edge $e$
  is also known as the \emph{bottleneck distance} between the
  endpoints of~$e$. If $T$ is a spanning tree of $ G$ and $e\not\in T$
  is an edge of $G$, then the set $\Path{e}{T}$ consists of all edges
  of the unique path in $T$ between the endpoints of $e$. If $\ee\in
  T$, then $\Cut{\ee}{T}$ consists  of all edges of $G$ lying between
  the two trees of $T-\ee$, except the edge $\ee$ itself.
\end{ex}

For the rest of the paper, let $\matr=(E,\f)$ be an arbitrary loopless
matroid.

\section{Main lemma}
\label{sec:main-lemma}

The following lemma (illustrated in \cref{figA}) is our main technical
tool.

\begin{lem}[Main lemma]
  \label{lemA}
  Let $x:E\to\RR$ be a weighting, $e\in E$ a ground element, and $\B$
  an $x$-optimal basis.
  \begin{enumerate}
    \Item{a} If $e\in\B$, then
    \begin{itemize}
    \item $x(e)\leq \minmax{e}{x} =$ the minimum weight of an element
      in $\Cut{e}{\B}$;
    \item the minimum weight of a basis avoiding the element $e$ is
      $x(\B)-x(e)+\minmax{e}{x}$.
    \end{itemize}

    \Item{b} If $e\not\in\B$, then
    \begin{itemize}
    \item $x(e)\geq \minmax{e}{x} =$ the maximum weight of an element
      in $\Path{e}{\B}$;
    \item the minimum weight of a basis containing the element $e$ is
      $x(\B)-\minmax{e}{x}+x(e)$.
    \end{itemize}
  \end{enumerate}
\end{lem}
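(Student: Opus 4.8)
The plan is to prove the two halves separately; they are mirror images of one another (the set $\Cut{e}{\B}$ versus $\Path{e}{\B}$, $\min$ versus $\max$), so I would carry out (a) in full detail and then indicate the symmetric modifications for (b). The one arithmetic fact used everywhere is the following consequence of optimality: since $\B$ is $x$-optimal, every single exchange $\exchange{\B}{\eee}{\ee}$ with $\eee\in\B$ and $\ee\notin\B$ produces a basis, so $x(\B)\leq x(\B)-x(\eee)+x(\ee)$, i.e.\ $x(\ee)\geq x(\eee)$. Combined with \cref{prop2}, this turns every ``$\ee$ can replace $\eee$ in $\B$'' statement into a weight inequality.

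For part (a), put $m:=\min\{x(\ee)\colon\ee\in\Cut{e}{\B}\}$, a genuine minimum since $\Cut{e}{\B}\neq\emptyset$. The bound $x(e)\leq m$ is immediate: by \cref{prop2}(a) each $\ee\in\Cut{e}{\B}$ replaces $e$ in $\B$, so $x(\ee)\geq x(e)$. To prove $\minmax{e}{x}=m$ I would show both inequalities. For ``$\geq$'': by \cref{prop3} every $e$-circuit $C$ meets $\Cut{e}{\B}$ inside $C-e$, and every element of $\Cut{e}{\B}$ has weight $\geq m$, so $\max_{\ee\in C-e}x(\ee)\geq m$; minimising over $C$ gives $\minmax{e}{x}\geq m$. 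For ``$\leq$'': choose $\ee^*\in\Cut{e}{\B}$ with $x(\ee^*)=m$ and take its fundamental circuit $C^*=\Circ{\ee^*}{\B}$; since $\ee^*\in\Cut{e}{\B}$ we have $e\in C^*$, so $C^*$ is an $e$-circuit with $C^*-e=(\Path{\ee^*}{\B}-e)\cup\{\ee^*\}$, and every $\eee\in\Path{\ee^*}{\B}$ satisfies $x(\eee)\leq x(\ee^*)=m$ by \cref{prop2}(b) and optimality; hence $\max_{\ee\in C^*-e}x(\ee)=m$, giving $\minmax{e}{x}\leq m$. Finally, for the weight of a basis avoiding $e$: the basis $\exchange{\B}{e}{\ee^*}$ witnesses the value $x(\B)-x(e)+m$, and for any basis $A$ with $e\notin A$ I would apply \cref{prop1}(a) to the pair $\B,A$ at the element $e\in\B$ to obtain $\ee\in A$ with $\exchange{\B}{e}{\ee}$ and $\exchange{A}{\ee}{e}$ both bases; the former forces $\ee\in\Cut{e}{\B}$, so $x(\ee)\geq m$, and optimality applied to the latter gives $x(A)\geq x(\B)-x(e)+x(\ee)\geq x(\B)-x(e)+m$.

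For part (b), set $M:=\max\{x(\ee)\colon\ee\in\Path{e}{\B}\}$. The inequality $\minmax{e}{x}\leq M$ is free: the fundamental circuit $\Circ{e}{\B}$ is an $e$-circuit whose non-$e$ part is exactly $\Path{e}{\B}$. Also $x(e)\geq M$, since by \cref{prop2}(b) and optimality $x(e)\geq x(\ee)$ for every $\ee\in\Path{e}{\B}$; thus once the reverse inequality $\minmax{e}{x}\geq M$ is available we get $\minmax{e}{x}=M$ and the first bullet of (b). For the weight of a basis containing $e$: the basis $\exchange{\B}{\ee^*}{e}$, where $\ee^*\in\Path{e}{\B}$ has $x(\ee^*)=M$, witnesses $x(\B)-M+x(e)$, and for any basis $A$ with $e\in A$ I would apply \cref{prop1}(a) to the pair $A,\B$ at $e\in A$ to get $\ee\in\B$ with $\exchange{A}{e}{\ee}$ and $\exchange{\B}{\ee}{e}$ both bases, whence $\ee\in\Path{e}{\B}$, $x(\ee)\leq M$, and optimality of $\B$ gives $x(A)\geq x(\B)+x(e)-x(\ee)\geq x(\B)+x(e)-M$.

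The one genuinely delicate point is $\minmax{e}{x}\geq M$ in (b): unlike in (a), there is no ready-made statement relating an arbitrary $e$-circuit to $\Path{e}{\B}$. To bridge this, I would fix $\ee^*\in\Path{e}{\B}$ with $x(\ee^*)=M$ and set $\B'=\exchange{\B}{\ee^*}{e}$, which is a basis by \cref{prop2}(b) and contains $e$. Given an arbitrary $e$-circuit $C$, extend the independent set $C-e$ to a basis $A$; then $e\notin A$ and, by uniqueness of fundamental circuits, $C=\Circ{e}{A}$, so $C-e=\Path{e}{A}$. Apply \cref{prop1}(a) to $\B',A$ at $e\in\B'$: this yields $\ee\in A$ with $\exchange{A}{\ee}{e}$ and $\exchange{\B'}{e}{\ee}$ both bases. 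Since $e\notin A$ we have $\ee\neq e$, so \cref{prop2}(b) gives $\ee\in\Path{e}{A}=C-e$; and as a set $\exchange{\B'}{e}{\ee}=\exchange{\B}{\ee^*}{\ee}$, so this being a basis forces either $\ee=\ee^*$ (weight $M$) or $\ee\notin\B$ with $\exchange{\B}{\ee^*}{\ee}$ a basis, in which case optimality gives $x(\ee)\geq x(\ee^*)=M$. Either way $C-e$ contains an element of weight $\geq M$, so $\max_{\ee\in C-e}x(\ee)\geq M$ for every $e$-circuit $C$, i.e.\ $\minmax{e}{x}\geq M$. I expect the only real care to be needed in this last step, namely checking that the exchanged elements genuinely lie outside the intended bases so that the optimality inequality $x(\ee)\geq x(\eee)$ applies.
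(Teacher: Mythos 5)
Your proof is correct, but it diverges from the paper's at a structural point worth noting. For the second bullet of each part (the weight of a next-best basis), the paper invokes the \emph{bijective} basis exchange (\cref{prop1}(b)) between $\B$ and a lightest basis in $\nbases$ (resp.\ $\pbases$), sums the resulting per-element inequalities, and then replaces the one critical element by $\eecut$ (resp.\ $\eepath$). You instead apply the \emph{symmetric} basis exchange (\cref{prop1}(a)) once, at $e$, to an arbitrary basis $A$ avoiding (resp.\ containing) $e$; this simultaneously produces the element $\ee$ showing $\ee\in\Cut{e}{\B}$ (resp.\ $\Path{e}{\B}$) and the basis $\exchange{A}{\ee}{e}$ (resp.\ $\exchange{A}{e}{\ee}$) to which optimality of $\B$ is applied, yielding the lower bound $x(A)\geq x(\B)-x(e)+m$ (resp.\ $x(A)\geq x(\B)+x(e)-M$) directly. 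This is a lighter tool and gives the lower bound on every competing basis in one step rather than comparing to a fixed minimizer via a bijection. The other difference is in the delicate inequality $\minmax{e}{x}\geq M$ in (b): the paper forms $\B'=\exchange{\B}{\eepath}{e}$ and appeals to \cref{prop3} to find an element of the witnessing circuit in $\Cut{e}{\B'}$, whereas you first extend $C-e$ to a basis $A$ and then apply \cref{prop1}(a) to $\B',A$. Your route is slightly longer (one could apply \cref{prop3} to $\B'$ and $C$ directly and skip the extension), but the case analysis that follows ($\ee=\ee^*$ versus $\ee\notin\B$) is handled correctly, so the argument is sound.
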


\begin{figure}
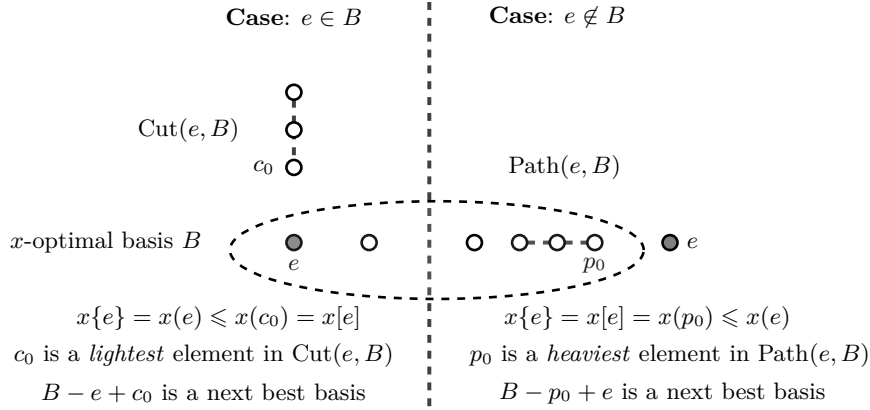

  \pathscuts
  \caption{A schematic summary of \cref{lemA}. A \emph{lightest}
    (resp., \emph{heaviest}) element of a set $S\subseteq E$ is an
    element of $S$ of the minimum (resp., maximum) weight (there may
    be several such elements). If $e\in\B$, then a \emph{next best}
    basis is a basis which has the minimal weight under all bases
    \emph{avoiding} the element $e$. If $e\not\in\B$, then a
    \emph{next best} basis is a basis which has the minimal weight
    under all bases \emph{containing} the element~$e$.}
  \label{figA}
\end{figure}

Given a weighting $x:E\to\RR$ and an element $e\in E$, by an
$e$-circuit \emph{witnessing} the min-max weight $\minmax{e}{x}$ of an
element $e$ we will mean an $e$-circuit $C$ on which the min-max
weight of the element $e$ is achieved, that is, for which
$\minmax{e}{x}$ is the maximum weight $x(\ee)$ of an element $\ee\in
C-e$.

\begin{proof}[Proof of \cref{lemA}(a)]
  Let $x:E\to\RR$ be a weighting, $e\in E$ a ground element, and $\B$
  an optimal basis. Assume that $e\in\B$, and let $\eecut$ be a
  lightest element in $\Cut{e}{\B}$.  Our goal is to show that

  \begin{enumerate}
    \Item{i} $x(e)\leq \minmax{e}{x} =x(\eecut)$, and
    \Item{ii} the set $\exchange{\B}{e}{\eecut}$ is a lightest basis
    among all bases avoiding the element~$e$.
  \end{enumerate}

  \noindent (i) To show the inequality $x(e)\leq x(\eecut)$, suppose
  for a contradiction that $x(\eecut) < x(e)$ holds. Since
  $\eecut\in\Cut{e}{\B}$, \cref{prop2} implies that the set
  $\exchange{\B}{e}{\eecut}$ is a basis. But then its weight is
  smaller than that of $\B$, contradicting the optimality of~$\B$.

  To show the inequality $\minmax{e}{x}\geq x(\eecut)$, let $C$ be an
  $e$-circuit witnessing the min-max weight $\minmax{e}{x}$ of the
  element~$e$. Hence, $\minmax{e}{x}=x(\ee_0)$, where $\ee_0$ is a
  \emph{heaviest} element of $C-e$. By \cref{prop3}, there is an
  element $\eee$ in the intersection $(C-e)\cap\Cut{e}{\B}$.  Then
  $x(\eee)\leq x(\ee_0)= \minmax{e}{x}$ because $\eee\in C-e$ and
  $\ee_0$ is a heaviest element of $C-e$, and $x(\eee)\geq x(\eecut)$
  because $\eee\in\Cut{e}{\B}$ and $\eecut$ is a lightest element of
  $\Cut{e}{\B}$. Hence, $\minmax{e}{x}\geq x(\eecut)$.

  To show the opposite inequality $\minmax{e}{x}\leq x(\eecut)$,
  consider the fundamental circuit $C=\Path{\eecut}{\B}+\eecut$ of the
  element $\eecut$ relative to the basis $\B$.  Since
  $\eecut\in\Cut{e}{\B}$, we have $e\in \Path{\eecut}{\B}$. Thus, both
  $e$ and $\eecut$ belong to the same circuit $C$. Let $\eepath$ be a
  \emph{heaviest} element in $C-e=\Path{\eecut}{\B}+\eecut-e$.  Since
  in the definition of the min-max weight $\minmax{e}{x}$ we take the
  \emph{minimum} over all circuits containing $e$, we have
  $\minmax{e}{x}\leq x(\eepath)$. So, it remains to show that
  $x(\eepath)\leq x(\eecut)$ holds.  Suppose for a contradiction that
  we have a strict inequality $x(\eepath)>x(\eecut)$. Then (clearly)
  $\eepath\neq\eecut$ and, hence, $\eepath\in \Path{\eecut}{\B}$. By
  \cref{prop2}, the set $\exB=\exchange{\B}{\eepath}{\eecut}$ is a
  basis. But the weight of this basis is
  $x(\exB)=x(\B)-x(\eepath)+x(\eecut)<x(\B)$, contradicting the
  optimality of the basis~$\B$. Thus, $\minmax{e}{x}\leq x(\eecut)$,
  as desired.

  (ii) Let $\nbases$ be the family of all bases \emph{avoiding} the
  element $e$; hence, $\B\not\in \nbases$.  Since $\eecut$ belongs to
  $\Cut{e}{\B}$, \cref{prop2} implies that the set
  $\exchange{\B}{e}{\eecut}$ is a basis, and this basis belongs to
  $\nbases$. Our goal is to show that this specific basis has the
  smallest weight among all bases in $\nbases$.

  So, let $\xB$ be a lightest basis in $\nbases$; hence,
  $e\in\B\setminus\xB$. By \cref{prop1}(b), there is a bijection
  $\bij:\B\to\xB$ such that the set $\exchange{\B}{\ee}{\bij(\ee)}$ is
  a basis for every element $\ee\in\B$. Since the basis $\B$ is
  optimal, this yields $x(\ee)\leq x(\bij(\ee))$ for every
  $\ee\in\B$. Consider the basis $\exB=\exchange{\B}{e}{c}$ where
  $c:=\bij(e)\in\xB$. Its weight is
  \[
  x(\exB)=x(c)+\sum_{\ee\in\B-e}x(\ee)\leq
  x(c)+\sum_{\ee\in\B-e}x(\bij(\ee))= x(c)+x(\xB-c)=x(\xB)\,.
  \]
  Since $\exB$ is a basis, \cref{prop2} implies that
  $c\in\Cut{e}{\B}$.  Since both elements $c$ and $\eecut$ belong to
  $\Cut{e}{\B}$, and since $\eecut$ is a lightest element of
  $\Cut{e}{\B}$, we have $x(\eecut)\leq x(c)$. So
  $x(\exchange{\B}{e}{\eecut})\leq x(\exchange{\B}{e}{c})= x(\exB)\leq
  x(\xB)$, meaning that $\exchange{\B}{e}{\eecut}$ is a lightest basis
  in $\nbases$, as claimed.
\end{proof}

\begin{rem}
  The inequality $\minmax{e}{x}\geq x(\eecut)$ in part (i) also
  follows from the equivalent definition \cref{eq:max-min} of the
  min-max weight $\minmax{e}{x}$ (see \cref{rem:edmonds}). For this,
  it is enough to verify that the set $\Cut{e}{\B}+e$ is a
  cocircuit. The set $\Cut{e}{\B}$ intersects every basis $\exB$ with
  $e\not\in\exB$: by the basis exchange axiom, $\exchange{\B}{e}{a}$
  is a basis for some $a\in \exB$; hence, $a\in
  \Cut{e}{\B}$. Moreover, no proper subset of $\Cut{e}{\B}$ has this
  property: for every $\ee\in \Cut{e}{\B}$, the set $\Cut{e}{\B}-\ee$
  does not intersect the basis $\exchange{\B}{e}{\ee}$.
\end{rem}

\begin{proof}[Proof of \cref{lemA}(b)]
  Let $x:E\to\RR$ be a weighting, $e\in E$ a ground element, and $\B$
  an optimal basis. Assume that $e\not \in\B$, and let $\eepath$ be a
  heaviest element in $\Path{e}{\B}$. Our goal is to show that
  \begin{enumerate}
    \Item{i} $x(e)\geq \minmax{e}{x} = x(\eepath)$, and
    \Item{ii} the set $\exchange{\B}{\eepath}{e}$ is a lightest basis
    among all bases containing the element $e$.
  \end{enumerate}

  \noindent (i) To show the inequality $x(e)\geq x(\eepath)$, suppose
  for a contradiction that $x(e)< x(\eepath)$. Since
  $\eepath\in\Path{e}{\B}$, \cref{prop2} implies that the set
  $\exchange{\B}{\eepath}{e}$ is a basis. But then its weight is
  smaller than that of $\B$, contradicting the optimality of~$\B$.

  The inequality $\minmax{e}{x} \leq x(\eepath)$ holds because
  $\Path{e}{\B}+e$ is an $e$-circuit, and $\minmax{e}{x}$ takes the
  \emph{minimum} (of the maximum weights) over \emph{all}
  $e$-circuits.  To show the opposite inequality $\minmax{e}{x} \geq
  x(\eepath)$, suppose for a contradiction that we have a strict
  inequality $\minmax{e}{x} < x(\eepath)$, and let $C$ be an
  $e$-circuit witnessing $\minmax{e}{x}$.  Hence, $x(\ee) <
  x(\eepath)$ holds for all $\ee\in C-e$.  Since
  $\eepath\in\Path{e}{\B}$, \cref{prop2} implies that
  $\exB=\exchange{\B}{\eepath}{e}$ is also a basis. Since $e\in\exB$
  and $C$ is an $e$-circuit, \cref{prop3} implies that some element
  $\ee_0\in C-e$ belongs to $\Cut{e}{\exB}$. So, by \cref{prop2}, the
  set $\exB'=\exchange{\exB}{e}{\ee_0}=\exchange{\B}{\eepath}{\ee_0}$
  is a basis. But since $x(\ee_0) < x(\eepath)$, we have $x(\exB') <
  x(\B)$, contradicting the optimality of~$\B$.

  \medskip
  \noindent (ii) Let $\pbases$ be the family of all bases
  \emph{containing} the element $e$; hence, $\B\not\in \pbases$.
  Since $\eepath$ belongs to $\Path{e}{\B}$, \cref{prop2} implies that
  the set $\exchange{\B}{\eepath}{e}$ is a basis, and this basis
  belongs to $\pbases$. Our goal is to show that this specific basis
  has the smallest weight among all bases in $\pbases$.

  So, let $\xB$ be a lightest basis in $\pbases$; hence,
  $e\in\xB\setminus\B$. By \cref{prop1}(b), there is a bijection
  $\bij:\xB\to\B$ such that the set $\exchange{\B}{\bij(\ee)}{\ee}$ is
  a basis for every element $\ee\in\xB$. Since the basis $\B$ is
  optimal, this yields $x(\bij(\ee))\leq x(\ee)$ for every
  $\ee\in\xB$. Consider the basis $\exB=\exchange{\B}{p}{e}$ where
  $p:=\bij(e)\in\B$. Its weight is
  \[
  x(\exB)=x(e)+\sum_{\ee\in\xB-e}x(\bij(\ee))\leq
  x(e)+\sum_{\ee\in\xB-e}x(\ee)= x(e)+x(\xB-e)=x(\xB)\,.
  \]
  Since $\exB$ is a basis, \cref{prop2} implies that
  $p\in\Path{e}{\B}$.  Since both elements $p$ and $\eepath$ belong to
  $\Path{e}{\B}$, and since $\eepath$ is a \emph{heaviest} element of
  $\Path{e}{\B}$, we have $x(\eepath)\geq x(p)$. So
  $x(\exchange{\B}{\eepath}{e})\leq x(\exchange{\B}{p}{e})=x(\exB)\leq
  x(\xB)$, meaning hat $\exchange{\B}{\eepath}{e}$ is a lightest basis
  in $\pbases$, as claimed.
\end{proof}

\begin{rem}[From optimal bases to bottleneck weights]\label{rem:basis-to-minmax}
  Having an optimal basis $\B$, we can determine the bottleneck weight
  $\bottl{e}{x}=\min\{x(e), \minmax{e}{x}\}$ of any ground element
  $e\in E$ from the weights of elements of $\B$: if $e\in\B$, then
  $\bottl{e}{x}=x(e)$ (by \cref{lemA}(a)), and if $e\not\in\B$, then
  $\bottl{e}{x}=\minmax{e}{x}$ is the weight of a heaviest element in
  $\Path{e}{\B}$ (by \cref{lemA}(b)). The following proposition shows
  how to find such a heaviest element in $\Path{e}{\B}$.
\end{rem}

\begin{prop}
  Let $x:E\to\RR$ be a weighting and $\B=\{\ee_1,\ldots,\ee_r\}$ be an
  optimal basis with $x(\ee_1)\leq\ldots\leq x(\ee_r)$. If
  $e\not\in\B$, then $\minmax{e}{x}=x(\ee_i)$, where $i$ is the
  smallest index for which the set $\{\ee_1,\ldots,\ee_i,e\}$ is
  dependent.
\end{prop}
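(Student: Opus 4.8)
The plan is to exploit the characterization of $\Path{e}{\B}$ from \cref{lemA}(b): since $\B$ is optimal and $e \notin \B$, the min-max weight $\minmax{e}{x}$ equals the weight of a heaviest element of $\Path{e}{\B}$. So everything reduces to identifying, among the ordered elements $\ee_1 \leq \cdots \leq \ee_r$ of $\B$, which is the heaviest one lying in $\Path{e}{\B}$, and showing this is exactly $\ee_i$ for the smallest $i$ with $\{\ee_1,\ldots,\ee_i,e\}$ dependent.

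First I would recall that $\Path{e}{\B} = \Circ{e}{\B} - e$, where $\Circ{e}{\B}$ is the unique circuit inside $\B + e$. The key structural fact I would use is a standard one: for an element $\ee_j \in \B$, we have $\ee_j \in \Path{e}{\B}$ if and only if $\ee_j$ lies in the unique circuit of $\B + e$, which happens if and only if $\B + e - \ee_j$ is independent (equivalently, a basis) — this is essentially \cref{prop2}(b). Now consider the smallest index $i$ for which $S_i := \{\ee_1,\ldots,\ee_i,e\}$ is dependent. Since $\B$ is independent, any dependent subset of $\B + e$ must contain $e$, so $S_i$ being dependent means $S_i$ contains the unique circuit $\Circ{e}{\B}$ — wait, more carefully: $S_i$ contains some $e$-circuit, and by the unique circuit property that circuit is $\Circ{e}{\B}$, so $\Circ{e}{\B} \subseteq S_i$, i.e. $\Path{e}{\B} \subseteq \{\ee_1,\ldots,\ee_i\}$. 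In particular $\ee_i$ is an upper bound on the weights of elements of $\Path{e}{\B}$. Conversely, minimality of $i$ means $S_{i-1} = \{\ee_1,\ldots,\ee_{i-1},e\}$ is independent, so $\ee_i \in \Circ{e}{\B}$; otherwise $\Circ{e}{\B} \subseteq \{\ee_1,\ldots,\ee_{i-1}\} \cup \{e\} = S_{i-1}$, contradicting independence of $S_{i-1}$. Hence $\ee_i \in \Path{e}{\B}$, and combined with the previous bound, $\ee_i$ is a heaviest element of $\Path{e}{\B}$.

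Putting it together: $\minmax{e}{x} = x(\ee_i)$ by \cref{lemA}(b). I would phrase the two halves symmetrically — the "$\Path{e}{\B} \subseteq \{\ee_1,\ldots,\ee_i\}$" direction gives $x(\ee_j) \leq x(\ee_i)$ for every $\ee_j \in \Path{e}{\B}$, and the "$\ee_i \in \Path{e}{\B}$" direction gives that $x(\ee_i)$ is actually attained. One should double-check the edge case where $\Path{e}{\B}$ could in principle be empty, but it cannot: since $e$ is not a loop, $\B + e$ is dependent, so a smallest dependent prefix $S_i$ exists with $i \geq 1$, and indeed $i \geq 2$ would only matter if one wanted the circuit to have size $\geq 2$, which is guaranteed by looplessness.

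The main obstacle — really the only subtle point — is the careful use of the unique circuit property to argue that a dependent prefix $S_i$ must contain the whole fundamental circuit $\Circ{e}{\B}$, not just some $e$-circuit; once that is nailed down, both inclusions follow cleanly and the rest is bookkeeping with the sorted order. Everything else is a direct appeal to \cref{prop2}, \cref{lemA}(b), and the uniqueness of fundamental circuits established in \cref{sec:prelim}.
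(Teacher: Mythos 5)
Your proof is correct and follows essentially the same route as the paper's: isolate the smallest dependent prefix, use the unique circuit property to show $\Path{e}{\B}\subseteq\{\ee_1,\ldots,\ee_i\}$ and $\ee_i\in\Path{e}{\B}$, then conclude via \cref{lemA}(b). The only cosmetic difference is that you work with the sets $S_j=\{\ee_1,\ldots,\ee_j,e\}$ directly rather than first naming the prefixes $\B_j=\{\ee_1,\ldots,\ee_j\}$, which is immaterial.
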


\begin{proof}
  For $j=1,\ldots,r$, let $\B_j=\{\ee_1,\ldots,\ee_j\}$ be the set of
  the $j$ lightest elements of $\B$, and let $\B_0=\emptyset$. The set
  $\B_0+e=\{e\}$ is independent because $e$ is not a loop, and the set
  $\B_r+e=\B+e$ is dependent, because $\B$ is a basis and
  $e\not\in\B$. So, there is a unique index $i\in\{1,\ldots,r\}$ such
  that the set $\B_{i-1}+e$ is independent but $\B_{i}+e$ is
  dependent.  Our goal is to show that $\minmax{e}{x}=x(\ee_{i})$
  holds for this~$i$.

  Since $\B_i$ is independent but $\B_i+e$ is dependent, the set
  $\B_i+e$ contains an $e$-circuit $C$. Since $C\subseteq
  \B_i+e\subseteq \B+e$, the uniqueness of fundamental circuits yields
  $C=\Circ{e}{\B}$; hence, $\Path{e}{\B}=\Circ{e}{\B}-e\subseteq
  \B_{i}$. Since the set $\B_{i-1}+e$ is independent, the last element
  $\ee_i$ of $\B_i$ must be contained in $\Path{e}{\B}$.  Since
  $\Path{e}{\B}\subseteq \B_i$ and since $\ee_i$ is a heaviest element
  of $\B_i$, $\ee_i$ is also a heaviest element of
  $\Path{e}{\B}$. Thus, \cref{lemA}(b) gives $\minmax{e}{x}
  =x(\ee_i)$.
\end{proof}

\section{Proof of Theorem~\ref{thm:kirchhoff}}
\label{sec:contr}

Let, as before, $\matr=(E,\f)$ be a loopless matroid, and $e\in E$ be
a ground element. Recall that the independent sets of the matroid
$\contr{\matr}{e}$, obtained by \emph{contracting} the element $e$,
are all sets $I-e$ with $I\in\f$ and $e\in I$, while those of the
matroid $\remove{\matr}{e}$, obtained by \emph{deleting} the element
$e$, are all sets $I\in\f$ with $e\not\in I$.  Our goal is to show
that, for every weighting $x:E\to\RR$, the following equalities hold:
\begin{enumerate}
  \Item{a} $\MB{\contr{\matr}{e}}{x} =\MB{\matr}{x} - \bottl{e}{x}$;
  \Item{b} $\MB{\fremove{\matr}{e}}{x}
  =\MB{\matr}{x}-\bottl{e}{x}+\minmax{e}{x}$.
\end{enumerate}

Take an arbitrary optimal basis $\B$ of $\matr$; hence,
$\MB{\matr}{x}=x(\B)$. In the proof of both equalities (a) and (b), we
distinguish two cases depending on whether our element $e$ belongs to
$\B$ or not.

(a) If $e\in\B$, then \cref{lemA}(a) yields $\bottl{e}{x}=x(e)$ and,
since then $\B-e$ is an optimal basis of $\contr{\matr}{e}$, we obtain
$\MB{\contr{\matr}{e}}{x}=x(\B)-x(e)=\MB{\matr}{x}-\bottl{e}{x}$.  If
$e\not\in\B$, then consider a basis $\exB$ of minimum weight among all
bases of $\matr$ containing the element~$e$. By \cref{lemA}(b), we
have $\bottl{e}{x}=\minmax{e}{x}$ and
$x(\exB)=x(\B)-\minmax{e}{x}+x(e)$. Since then $\exB-e$ is an optimal
basis of $\contr{\matr}{e}$, we obtain $
\MB{\contr{\matr}{e}}{x}=x(\exB-e)
=x(\B)-\minmax{e}{x}=\MB{\matr}{x}-\bottl{e}{x}$.

(b) If $e\not\in\B$, then \cref{lemA}(b) yields
$\bottl{e}{x}=\minmax{e}{x}$ and, since then $\B$ is also an optimal
basis of $\remove{\matr}{e}$, we obtain
$\MB{\fremove{\matr}{e}}{x}=x(\B)=\MB{\matr}{x}
=\MB{\matr}{x}-\bottl{e}{x}+\minmax{e}{x}$. If $e\in\B$, then consider
a basis $\exB$ of minimum weight among all bases of $\matr$ avoiding
the element~$e$. By \cref{lemA}(a), we have $\bottl{e}{x}=x(e)$ and
$x(\exB)=x(\B)-x(e)+\minmax{e}{x}$.  Since $\exB$ is an optimal basis
of $\remove{\matr}{e}$, we obtain
$\MB{\fremove{\matr}{e}}{x}=x(\exB)=x(\B)-x(e)+\minmax{e}{x}
=\MB{\matr}{x}-\bottl{e}{x}+\minmax{e}{x}$.  \qed

\section{Proof of Theorem~\ref{thm:postopt}}
\label{sec:postopt}
Let $e\in E$ and let $x,\xz:E\to\RR$ be weightings that differ only in
the weights given to the element~$e$. Our goal is to show the equality
$\MB{\matr}{\xz}-\MB{\matr}{x}=\bottl{e}{\xz}-\bottl{e}{x}$.

Recall that the independent sets of the matroid $\contr{\matr}{e}$ are
all sets $I-e$ with $I\in\f$ and $e\in I$. By \cref{thm:kirchhoff}(a),
the equality $\MB{\contr{\matr}{e}}{z} =\MB{\matr}{z} - \bottl{e}{z}$
holds for every weighting $z:E\to\RR$.  Since the weighting $\xz$ does
not change the weight of elements in $E-e$, we have
$\MB{\contr{\matr}{e}}{\xz}=\MB{\contr{\matr}{e}}{x}$. So,
\cref{thm:kirchhoff}(a) yields
\[
\MB{\matr}{\xz} - \bottl{e}{\xz} =
\MB{\contr{\matr}{e}}{\xz}=\MB{\contr{\matr}{e}}{x} = \MB{\matr}{x} -
\bottl{e}{x}\,,
\]
from which $\MB{\matr}{\xz}-
\MB{\matr}{x}=\bottl{e}{\xz}-\bottl{e}{x}$ follows.  \qed

\section{Proof of Theorem~\ref{thm:sens1}}
\label{sec:sens1}
Fix a ground element $e\in E$, and let $x,\xz:E\to\RR$ be weightings
that only differ in the weights given to~$e$. Since the min-max weight
of $e$ only depends on the weights of the elements in $E-e$, and since
the weighting $\xz$ leaves these weights unchanged, we have
$\minmax{e}{\xz}=\minmax{e}{x}$, that is, the min-max weight of the
element $e$ does not change. Thus, the bottleneck weight of $e$ under
the new weighting $\xz$ is
$\bottl{e}{\xz}=\min\{\xz(e),\minmax{e}{x}\}$.  Recall that the
\emph{tolerance} of the element $e$ under the weighting $x$ is
$\tol{x}{e}=|\minmax{e}{x}-x(e)|$.

Let $\B$ be an $x$-optimal basis. Our goal is to prove the following
three assertions.
\begin{enumerate}
  \Item{a} If $e\in\B$, then $\B$ is $\xz$-optimal if and only if
  $\xz(e)\leq \minmax{e}{x}$.
  \Item{b} If $e\not\in\B$, then $\B$ is $\xz$-optimal if and only if
  $\xz(e)\geq \minmax{e}{x}$.
  \Item{c} If $|\xz(e)-x(e)|\leq \tol{x}{e}$, then  $\B$ is
  $\xz$-optimal.
\end{enumerate}
\begin{proof}
  (a) Let $e\in\B$. Then $\xz(B) = x(B) + \xz(e) - x(e)$ and, by
  \cref{lemA}(a), $\bottl{e}{x}=x(e)$.  \Cref{thm:postopt} yields $
  \MB{\matr}{\xz}=\MB{\matr}{x} + \bottl{e}{\xz}-\bottl{e}{x} = x(B) +
  \min\{\xz(e), \minmax{e}{x}\}-x(e)\,.  $ The basis $\B$ is
  $\xz$-optimal iff $\xz(B) =\MB{\matr}{\xz}$, which happens precisely
  when $\min\{\xz(e), \minmax{e}{x}\}=\xz(e)$, that is, when
  $\xz(e)\leq \minmax{e}{x}$.

  (b) Let $e\not\in\B$. Then $\xz(B)=x(B)$ and, by \cref{lemA}(b),
  $\bottl{e}{x}=\minmax{e}{x}$.  \Cref{thm:postopt} yields $
  \MB{\matr}{\xz}=\MB{\matr}{x} + \bottl{e}{\xz}-\bottl{e}{x} = x(B) +
  \min\{\xz(e), \minmax{e}{x}\}-\minmax{e}{x}\,.  $ The basis $\B$ is
  $\xz$-optimal iff $\xz(B) =\MB{\matr}{\xz}$, which happens precisely
  when $\min\{\xz(e), \minmax{e}{x}\}=\minmax{e}{x}$, that is, when
  $\xz(e)\geq \minmax{e}{x}$.

  (c) Assume $|\xz(e)-x(e)|\leq \tol{x}{e}$, i.e.,
  $x(e)-\tol{x}{e}\leq \xz(e)\leq x(e)+\tol{x}{e}$.  If $e\in\B$, then
  \cref{lemA}(a) implies $x(e)\leq\minmax{e}{x}$ and, hence,
  $\tol{x}{e}=\minmax{e}{x}-x(e)$.  Thus, $\xz(e)\leq
  x(e)+\tol{x}{e}=\minmax{e}{x}$, and claim (a) ensures that the basis
  $\B$ is $\xz$-optimal.  If $e\not\in\B$, then \cref{lemA}(b) implies
  $x(e)\geq\minmax{e}{x}$ and, hence,
  $\tol{x}{e}=x(e)-\minmax{e}{x}$. Thus, $\xz(e)\geq
  x(e)-\tol{x}{e}=\minmax{e}{x}$, and claim (b) ensures that the basis
  $\B$ is $\xz$-optimal.
\end{proof}

\begin{rem}\label{rem:libura}
  Given \cref{lemA}, claims (a) and (b) of \cref{thm:sens1} also
  follow from a result of Libura~\cite[Lemma~4]{libura} stating that
  $\B$ is $\xz$-optimal iff $\xz(e)\leq x(\eecut)$ holds for a
  lightest element $\eecut$ of $\Cut{e}{\B}$ (when $e\in\B$) or
  $\xz(e)\geq x(\eepath)$ holds for a heaviest element $\eepath$ in
  $\Path{e}{\B}$ (when $e\not\in\B$).  By \cref{lemA},
  $\minmax{e}{x}=x(\eecut)$ (when $e\in\B$) and
  $\minmax{e}{x}=x(\eepath)$ (when $e\not\in\B$).
\end{rem}

\section{Proof of Theorem~\ref{thm:sens-gen}}
\label{sec:sens-gen}

We will need the following simple fact.

\begin{prop}\label{prop4}
  Let $x:E\to\RR$ be a weighting, and $\B$ a basis. If $\B$ is not
  $x$-optimal, then $x(e)>x(\ee)$ holds for some elements $e\in\B$ and
  $\ee\in\Cut{e}{\B}$.
\end{prop}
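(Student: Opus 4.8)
The plan is to argue by contradiction, exploiting the bijective basis exchange (\cref{prop1}(b)) against any genuinely lighter basis. Suppose $\B$ is not $x$-optimal, so there is a basis $\exB$ with $x(\exB) < x(\B)$. By \cref{prop1}(b) there is a bijection $\bij\colon\B\to\exB$ such that $\exchange{\B}{\ee}{\bij(\ee)}$ is a basis for every $\ee\in\B$. Summing the weights of the images, $\sum_{\ee\in\B}x(\bij(\ee)) = x(\exB) < x(\B) = \sum_{\ee\in\B}x(\ee)$, so there must be at least one element $e\in\B$ with $x(\bij(e)) < x(e)$; in particular $\bij(e)\neq e$, so $\bij(e)\in\exB\setminus\B = \exB\setminus\B$, i.e.\ $\bij(e)\notin\B$.

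The remaining step is to identify $\ee := \bij(e)$ as a member of $\Cut{e}{\B}$. By the defining property of $\bij$, the set $\exchange{\B}{e}{\bij(e)} = \exchange{\B}{e}{\ee}$ is a basis, and we have just noted $\ee\in E\setminus\B$. By \cref{prop2}(a) (the exchange characterisation of the fundamental cut), this is exactly the statement that $\ee\in\Cut{e}{\B}$. Together with $x(e) > x(\ee)$ this gives the claimed pair $e\in\B$, $\ee\in\Cut{e}{\B}$ with $x(e)>x(\ee)$.

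I do not anticipate a real obstacle here: the only subtle point is making sure the bijection actually produces a single-element swap that lands \emph{outside} $\B$ (so that \cref{prop2}(a) applies), which is handled by the observation $x(\bij(e))<x(e)\Rightarrow\bij(e)\neq e$. An alternative, essentially equivalent route would be to start from a minimum-weight basis $\exB$ and invoke the symmetric basis exchange \cref{prop1}(a) instead, but the counting argument via \cref{prop1}(b) is the cleanest way to locate the required element $e\in\B$, so that is the approach I would write up.
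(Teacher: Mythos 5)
Your proposal is correct and is essentially the paper's proof: take an optimal basis $\exB$, invoke the bijective basis exchange (\cref{prop1}(b)) to get $\bij\colon\B\to\exB$, compare $\sum x(\ee)$ with $\sum x(\bij(\ee))$ to locate an $e\in\B$ with $x(\bij(e))<x(e)$, and use \cref{prop2}(a) to place $\bij(e)$ in $\Cut{e}{\B}$. One very small elision, present in the same form in the paper: the step from $\bij(e)\neq e$ to $\bij(e)\notin\B$ is not automatic from the codomain alone but follows because $\exchange{\B}{e}{\bij(e)}$ is a basis, hence must have cardinality $|\B|$, which forces $\bij(e)\notin\B-e$; spelling this out would make the argument fully airtight, but it is a cosmetic point only.
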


\begin{proof}
  Let $\exB$ be an $x$-optimal basis; hence, $x(\B)>x(\exB)$. By
  \cref{prop1}(b), there is a bijection $\bij:\B\to \exB$ such that
  the set $\exchange{\B}{e}{\bij(e)}$ is a basis for every $e\in
  \B$. Hence, by \cref{prop2}, $\bij(e)\in\Cut{e}{\B}$ holds for every
  $e\in\B\setminus\exB$.  Finally, since $
  \sum_{e\in\B}x(e)=x(\B)>x(\exB)=\sum_{e\in\B}x(\bij(e))\,, $ a
  strict inequality $x(e)>x(\bij(e))$ must hold for at least one
  element~$e\in\B$.
\end{proof}

\begin{proof}[Proof of \cref{thm:sens-gen}(a)]
  Let $x:E\to\RR$ be a weighting, and let $\xz:E\to\RR$ be a weighting
  satisfying $|\xz(e)-x(e)|\leq\tfrac{1}{2}\,\tol{x}{e}$ for all $e
  \in E$.  Our goal is to show that then every $x$-optimal basis is
  also $\xz$-optimal.

  Assume to the contrary that some $x$-optimal basis $\B$ is not
  $\xz$-optimal. Then, by \cref{prop4}, $\xz(\ee) < \xz(e)$ holds for
  some elements $e\in\B$ and $\ee\in \Cut{e}{\B}$; hence, we also have
  $e\in \Path{\ee}{\B}$.  Since $e\in\B$ and $\ee \in \Cut{e}{\B}$,
  \cref{lemA}(a) yields $x(e)\leq \minmax{e}{x}\leq x(\ee)$. Since
  $\ee\not\in\B$ and $e\in \Path{\ee}{\B}$, \cref{lemA}(b) yields
  $x(\ee)\geq \minmax{\ee}{x}\geq x(e)$. In particular,
  $\tol{x}{e}=\minmax{e}{x}-x(e)$ and $\tol{x}{\ee} =
  x(\ee)-\minmax{\ee}{x}$.  Putting everything together, we get
  \[
  \frac{\minmax{e}{x}+x(e)}{2} \leq \frac{x(\ee)+\minmax{\ee}{x}}{2} =
  x(\ee) - \frac{\tol{x}{\ee}}{2} \leq \xz(\ee) < \xz(e) \leq x(e) +
  \frac{ \tol{x}{e} }{2} = \frac{x(e)+\minmax{e}{x}}{2}\,,
  \]
  a contradiction.
\end{proof}

\begin{proof}[Proof of \cref{thm:sens-gen}(b)]
  Let $\epsilon>0$ and let $x:E\to\RR$ be a weighting. Take an
  arbitrary $x$-optimal basis $\B$. Our goal is to show that there is
  a weighting $\xz:E\to\RR$ such that $|\xz(e)-x(e)|\leq
  \tfrac{1}{2}\,\tol{x}{e}+\epsilon$ holds for all elements $e\in E$
  but the basis $\B$ is not $\xz$-optimal.

  Consider all pairs $(e,\ee)$ such that $e\in\B$ and
  $\ee\in\Cut{e}{\B}$; hence, $e\in\Path{\ee}{\B}$. Since the basis
  $\B$ is $x$-optimal, \cref{prop2} implies that $x(\ee)\geq x(e)$
  holds for every such pair.  So, let $(e,\ee)$ be a pair for which
  the difference $x(f)-x(e)$ is smallest possible.  Then $\ee$ is a
  lightest element in $\Cut{e}{\B}$ and $e$ is a heaviest element in
  $\Path{\ee}{\B}$. By \cref{lemA}, $x(e)\leq \minmax{e}{x}=x(\ee)$
  and $x(\ee)\geq \minmax{\ee}{x}=x(e)$. Hence,
  $\tol{x}{e}=\minmax{e}{x}-x(e)=x(\ee)-x(e)=x(\ee)-\minmax{\ee}{x}=\tol{x}{\ee}$,
  that is, both elements $e$ and $\ee$ have the same tolerance
  $t:=\tol{x}{e}=\tol{x}{\ee}$ under the weighting~$x$.

  Now, let $\xz:E\to\RR$ be the weighting with
  $\xz(e):=x(e)+\tfrac{1}{2}\,t+\epsilon$,
  $\xz(\ee):=x(\ee)-\tfrac{1}{2}\,t-\epsilon$, and $\xz(\eee):=
  x(\eee)$ for all other elements $\eee$.  So,
  $|\xz(\eee)-x(\eee)|=\tfrac{1}{2}\,\tol{x}{\eee}+\epsilon$ for
  $\eee\in\{e,\ee\}$, and $|\xz(\eee)-x(\eee)|=0 <
  \tfrac{1}{2}\,\tol{x}{\eee}+\epsilon$ for all
  $\eee\not\in\{e,\ee\}$.  Then
  $\xz(e)-\xz(\ee)=x(e)-x(\ee)+t+2\epsilon=2\epsilon>0$, and
  $\xz(\ee)<\xz(e)$ implies that the basis $\exchange{\B}{e}{\ee}$ has
  smaller $\xz$-weight than $\B$, so $\B$ cannot be $\xz$-optimal.
\end{proof}

\section{Proof of Theorem~\ref{thm:persist}}
\label{sec:persistency}

Every weighting $x:E\to\RR$ yields the partition $E=\allE \cup \noneE
\cup \someE$ of ground elements into three (not necessarily nonempty)
subsets $\allE$ (elements belonging to all $x$-optimal bases),
$\noneE$ (elements not belonging to any $x$-optimal basis), and
$\someE$ (elements belonging to some but not to all $x$-optimal
bases). Our goal is to prove the following claims:

\begin{enumerate}
  \Item{1} $e\in\allE$ if and only if $\minmax{e}{x}> x(e)$;
  \Item{2} $e\in\noneE$ if and only if $\minmax{e}{x} < x(e)$;
  \Item{3} $e\in\someE$ if and only if $\minmax{e}{x}=x(e)$.
  \Item{4} If all weights are distinct, then $\B=\{e\in E\colon
  \minmax{e}{x} > x(e)\}$ is the unique optimal basis.
\end{enumerate}

\begin{proof}
  (1) To show the direction $(\Rightarrow)$, let $e\in\allE$ and take
  any optimal basis $\B$; hence, $e\in \B$. By \cref{lemA}(a), we then
  have $x(e)\leq \minmax{e}{x}=x(\eecut)$, where $\eecut$ is a
  lightest element in $\Cut{e}{\B}$. By \cref{prop2}, the set
  $\exB=\exchange{\B}{e}{\eecut}$ is a basis. If the equality
  $x(e)=\minmax{e}{x}$ held, then this basis would be optimal,
  too. But $e\not\in\exB$, a contradiction with $e\in\allE$. Hence
  $\minmax{e}{x}> x(e)$ holds. The opposite direction $(\Leftarrow)$
  follows directly from \cref{lemA}(b): if the element $e$ is avoided
  by some optimal basis, then $\minmax{e}{x}\leq x(e)$ holds.

  (2) The proof of this claim is similar. To show the direction
  $(\Rightarrow)$, let $e\in\noneE$ and take any optimal basis $\B$;
  hence, $e\not\in \B$. By \cref{lemA}(b), we then have $x(e)\geq
  \minmax{e}{x}=x(\eepath)$, where $\eepath\in\B$ is a heaviest
  element in $\Path{e}{\B}$. By \cref{prop2}, the set
  $\exB=\exchange{\B}{\eepath}{e}$ is a basis. If the equality
  $x(e)=\minmax{e}{x}$ held, then this basis would be optimal,
  too. But $e\in\exB$, a contradiction with $e\in\noneE$. Hence,
  $\minmax{e}{x} < x(e)$ holds. The opposite direction $(\Leftarrow)$
  in (2) follows directly from \cref{lemA}(a): if the element $e$ is
  contained in some optimal basis, then $\minmax{e}{x} \geq x(e)$
  holds.

  (3) Follows directly from claims (1) and (2).

  (4) Assume that all weights are distinct.  Then the optimal basis
  $\B$ is \emph{unique}: if there were two distinct optimal bases,
  then (by the basis exchange axiom) a heaviest element, lying in one
  basis but not in the other, could be replaced by a (strictly)
  lighter element of the other basis, contradicting the optimality of
  the former basis.  Since the basis $\B$ is unique, we have
  $\B=\allE$ and, by (1), $\B=\{e\in E\colon \minmax{e}{x} > x(e)\}$,
  as claimed.
\end{proof}

\end{document}